\def\mdseries@tt{m}             %1
\def\BibTeX{{\rm B\kern-.05em{\sc i\kern-.025em b}\kern-.08emT\kern-.1667em\lower.7ex\hbox{E}\kern-.125emX}}
\newcommand{\TRvsCCSC}[2]{{#1}}   % Tech Report, with both pedagogy and experiments
\newcommand{\code}[1]{\texttt{#1}} % not all \code is python, so not {\pyi{#1}} 
\newcommand{\newterm}[1]{\textit{#1}}
\newcommand{\unused}[1]{}
\newcommand{\nobracket}{}
\newcommand{\tmtextbf}[1]{{\bfseries{#1}}}
\newcommand{\tmtextit}[1]{{\itshape{#1}}}
\newcommand{\tmverbatim}[1]{{\ttfamily{#1}}}
\newcommand{\textdots}[0]{\tmverbatim{...}}
\newcommand{\dwln}[0]{\cline{1-1}}
\newcommand{\dwupl}[0]{\vspace{-3mm}}
\newcommand{\dwupr}[0]{\vspace{-4mm}}
\newcommand{\bnfdef}{%
  \mathrel{\!\colon\!\!\colon\!\mathord{=}}%
}
\newtheorem{claim}{Claim}[section]
\newtheorem*{claim*}{Claim}
\newcommand{\oi}[1]{\ocamlinline{#1}}
\newcommand{\coqi}[1]{\coqinline{#1}}
\newcommand{\ci}[1]{\cinline{#1}}
\newcommand{\pyi}[1]{\pythoninline{#1}}
\newcommand{\haski}[1]{\haskellinline{#1}}
\newcommand{\eg}{\emph{e.g.\@\xspace}}
\newcommand{\ie}{\emph{i.e.\@\xspace}}
\newcommand{\etc}{\emph{etc.\@\xspace}}
\newcommand{\orca}{\textsc{Orca}\xspace}
\begin{document}
%%% \sloppy   % if this is taken out, be sure to check LaTeX output for 'overfull hbox' warnings
% \ifthenelse{\equal{true}{true}}{\typeout{SANTA CLAUSE}}{\typeout{FALSE CLAUSE}}
\TRvsCCSC{
\typeout{======================================}
\typeout{======== TECH REPORT VERSION =========}
\typeout{======================================}
}{
\typeout{======================================}
\typeout{============ CCSC VERSION ============}
\typeout{======================================}
}

%
% The "title" command has an optional parameter, allowing the author to define a "short title" to be used in page headers.

%%%%%\title{Introducing the Concept of Mathematical Proof via\\
%%%%%Symbolic Execution of Code}
\title{A Bridge Anchored on Both Sides}
\subtitle{Formal Deduction in Introductory CS, and Code Proofs in Discrete Math
%%          \bf{(See note about publication venues in LaTeX source)}\\
%%          \bf{(Timeline: Final proofread of core sections 2-5 Wed. evening, Dave sends CCSC-E stuff Thursday, write "experiments" Thurs/Fri., send it Friday or Monday (Dave \textit{registers} it Friday), breaks stuff, submits}\\
%%          \bf{(Workflow: 'comment' in document for significant things, just fix tiny stuff)}
}
% 
% Dave's vision of publication venues:
%  1. HC CS Tech Report ~June 3 or so
%  2. Submit "what we're doing now" to CCSC-E (reg. June 3, meet Oct. 25-26): https://sites.google.com/site/ccsceastern/
%  . Submit experimental plans to ICER WiP (reg. June 7, meet Aug. 10-11): https://icer.acm.org/icer-2019/work-in-progress/
%  . NOTE: CANNOT do WiP and "lightning talk" or poster at ICER, see https://icer.acm.org/icer-2019/lightning-talks-posters/

%
% The "author" command and its associated commands are used to define the authors and their affiliations.
% Of note is the shared affiliation of the first two authors, and the "authornote" and "authornotemark" commands
% used to denote shared contribution to the research.
\author{David G. Wonnacott}
%% \authornote{Both authors contributed equally to this research.}
\email{davew@cs.haverford.edu}
%% \orcid{1234-5678-9012}
\affiliation{%
  \institution{Haverford College}
  \streetaddress{}
  \city{Haverford}
  \state{Pennsylvania}
  \postcode{19041}
}
\author{Peter-Michael Osera}
%% \authornote{Both authors contributed equally to this research.}
\email{osera@cs.grinnell.edu}
%% \orcid{1234-5678-9012}
\affiliation{%
  \institution{Grinnell College}
  \streetaddress{}
  \city{Grinnell}
  \state{Iowa}
  \postcode{50112}
}

%
% By default, the full list of authors will be used in the page headers. Often, this list is too long, and will overlap
% other information printed in the page headers. This command allows the author to define a more concise list
% of authors' names for this purpose.
%% \renewcommand{\shortauthors}{Trovato and Tobin, et al.}

%
% The abstract is a short summary of the work to be presented in the article.
\begin{abstract}
There is a sharp disconnect between the programming and mathematical portions of the standard undergraduate computer science curriculum,
leading to student misunderstanding about how the two are related.
We propose connecting the subjects early in the curriculum---specifically, in CS1 and the introductory discrete mathematics course---by using formal reasoning about programs as a bridge between them.

This article reports on Haverford and Grinnell College's experience in constructing the end points of this bridge between CS1 and discrete mathematics.
Haverford's long-standing ``3-2-1'' curriculum introduces code reasoning in conjunction with introductory programming concepts, and Grinnell's discrete mathematics introduces code reasoning as a motivation for logic and formal deduction.
Both courses present code reasoning in a style based on symbolic code execution techniques from the programming language community, but tuned to address the particulars of each course.

These courses rely primarily on traditional means of proof authoring with pen-and-paper.
This is unsatisfactory for students who receive no feedback until grading on their work and instructors who must shoulder the burden of interpreting students' proofs and giving useful feedback.
To this end, we also describe the current state of \orca, an in-development proof assistant for undergraduate education that we are developing to address these issues in our courses.

\TRvsCCSC{%
  Finally, in teaching our courses, we have discovered a number of educational research questions about the effectiveness of code reasoning in bridging the gap between programming and mathematics, and the ability of tools like \orca to support this pedagogy.
  We pose these research questions as next steps to formalize our initial experiences in our courses with the hope of eventually generalizing our approaches for wider adoption.
}{}

\end{abstract}

%
% The code below is generated by the tool at http://dl.acm.org/ccs.cfm.
% Please copy and paste the code instead of the example below.
%
%% \begin{CCSXML}
%% <ccs2012>
%% <concept>
%%  <concept_id>10010520.10010553.10010562</concept_id>
%%  <concept_desc>Computer systems organization~Embedded systems</concept_desc>
%%  <concept_significance>500</concept_significance>
%% </concept>
%% <concept>
%%% ...
%% </concept>
%% </ccs2012>
%% \end{CCSXML}

%% \ccsdesc[500]{Computer systems organization~Embedded systems}
%% \ccsdesc[300]{Computer systems organization~Redundancy}
%% \ccsdesc{Computer systems organization~Robotics}
%% \ccsdesc[100]{Networks~Network reliability}

%
% Keywords. The author(s) should pick words that accurately describe the work being
% presented. Separate the keywords with commas.
\keywords{CS Education, CS1, deduction, discrete mathematics, proof, verification}

%
% A "teaser" image appears between the author and affiliation information and the body 
% of the document, and typically spans the page. 
%% \begin{teaserfigure}
%%   \includegraphics[width=\textwidth]{sampleteaser}
%%   \caption{Seattle Mariners at Spring Training, 2010.}
%%   \Description{Enjoying the baseball game from the third-base seats. Ichiro Suzuki preparing to bat.}
%%   \label{fig:teaser}
%% \end{teaserfigure}

%
% This command processes the author and affiliation and title information and builds
% the first part of the formatted document.
\maketitle

\section{Introduction}
\label{sec:introduction}

Mathematics is a critical part of computer science education,
as it is both essential to many aspects of practical system design
and a barrier to success for many students.
The latest revision of the ACM's Computer Science Curricula recognizes proofs as
one of several areas of mathematics that are integral to a wide variety of sub-fields of computer science.
According to the curricular guidelines:
\begin{quote}
\ldots an ability to create and understand a proof---either a formal symbolic proof or a less formal but still mathematically rigorous argument---is important in virtually every area of computer science, including (to name just a few) formal specification, verification, databases, and cryptography.~\cite{cscurricula:2013}
\end{quote}
This list is hardly exhaustive; for example,
computer security efforts have made significant headway though the use of formal methods~\cite{Quanta:2016hackerproof}.
Perhaps more importantly,
familiarity with formal proof can also support informal work,
by providing a richer understanding of code
and by serving as a tool for communication about algorithmic ideas~\cite{pierce:sf:2015,manber:algorithms:1989}.

Although faculty identify proofs as a key component of a computer science education~\cite{osera:sigcse:2014},
students and professionals frequently do not appreciate the relevance of mathematical proof to their careers~\cite{exter-2012}.
One problem is that students fail to grasp the \emph{mechanics} or rules of proof during the course of their studies.
Proof by induction seems to be particularly difficult for students~\cite{polycarpou:sigcse:2008, baker:thesis:1995, dubinsky:jmb:1986, dubinsky:jmb:1989},
though the proofs themselves are not inherently more complex than
the algorithms that these same students master.
% I wish ``exasperbated'' meant exascerbated + exaspertated, but Urban Dictionary says otherwise
This difficulty is exacerbated by the fact that students typically develop proof skills in a environment that differs greatly from the one in which they hone their coding skills.
Coding takes place in a feedback-rich environment,
which invites even a novice to explore new ideas and develop their own understanding through a rich variety of experiences.
In contrast, proof techniques are typically explained and demonstrated by a teacher,
and the student is expected to absorb the concepts before exploring them.
They then attempt to follow the model with pencil and paper,
receiving external feedback much later when the work is graded.

Educators have been concerned with the importance of exploration and feedback
since at least the work of Kolb on Experiential Learning in the 1980's,
if not all the way back to Dewey's Pragmatism decades earlier~\cite[Chapter 3]{gravells:ed:2014}.
This principle has not escaped teachers of mathematics,
and they have explored solutions to this problem in the context of their courses.
In-class supervised problem solving greatly reduces the feedback lag,
but it does not approach the rates possible with software.
While some logic/mathematics courses have responded to this challenge by introducing software,
for example the ``Logic and Proofs'' course at Carnegie-Mellon University~\cite{SS:ile:1994},
this addresses only one end of the proof/computation connection,
leaving unrealized the potential to motivate proof through its relevance in coding courses.

Few of the courses that introduce programming illustrate the relevance of deduction or proof,
even when the curriculum is clearly informed by a formal foundation.
For example, the ``How to Design Programs'' curriculum~\cite{HtDP} (HtDP, hereafter)
describes computation as a process of substitution,
essentially a special case of the substitutions made in a deductive proof.
However, while the text occasionally mentions that a particular property may be proved,
it leaves all actual proofs for ``a follow-up course''.
The term deduction appears (in the online draft of the 2nd edition) only in the context of a program making a payroll deduction.

The substitution process used to illustrate computation in this curriculum (and a few others)
relies on the same ``substitution of equals for equals'' principle that underlies deductive proof.
In high-school algebra, a set of definitions such as $a=14$, $b=7$,
$x = 2a+2b$ provides 
a variety of opportunities.
We can, of course, rewrite the definition of $x$ as $x=42$,
and then replace other $a$'s, $b$'s, and $x$'s with the corresponding numerical values.
Alternatively, we could replace occurrences of $x$ with $(2a+2b)$,
or with the equivalent $2(a+b)$.

In program code, variable definitions take on a different notation, such as
C++'s \ci{int x=2*a+2*b;} or
Python's \pyi{x=2*a+2*b},
usually require that all uses of a variable occur after its definition,
and can have very different meanings if later assignments modify the value associated with the name.
However, in the absence of subsequent assignments,
we (or an optimizing compiler) can also apply any of the above substitutions
to properly ordered/structured code
without changing a program's result.
\unused{
    This fact is often omitted from introductory curricula,
    resulting in considerable surprise (and occasional disbelief) when
    students begin to learn about the actual process of executing code
    in upper-level systems courses
    (\eg, on hardware, programming-languages, or concurrent programming).
    In our experience,
    beginners are quite capable of accepting this fact,
    especially if it is related to other examples in which
    they see several approaches to achieving one goal, \eg sorting.
}

When all code is \emph{pure}, \ie, free of side-effects,
substitution rules can also be used to describe
other programming language elements such as
\code{if/else}, function calls, \etc.
While it has been known for some time that this approach can be
generalized to produce a universal substitution-based view of computation~\cite{turing_1937},
the full mathematical treatment is not generally considered appropriate in the context of introductory computer science.
However, the underlying principle can enrich a new student's understanding of computation.
%Fortunately, the full mathematical treatment is not necessary in the early steps of a CS education.

In their first exposure to programming,
a student begins to understand both the notation and conventions for expressing a program,
and what it means to execute a program.
This mental model of the process of computation is called a \newterm{notional machine}~\cite{DOM:notional:1999}.
It may be taught explicitly, or the student may develop it implicitly during activities where they are forced to think about how programs execute, such as debugging.
Studies of student outcomes argue for explicit teaching,
and plenty of practice with the notional machine, as
``changing an ingrained but flawed mental model is more difficult than
helping a model to be constructed in the first place''\cite{Sorva:2013:NMI}.

Notional machines of introductory programming courses often resemble a typical debugger:
program control moves from line to line;
as it does so, the machine updates the set of variables,
which are shown as a group of name-value pairs (or name-object-value relationships)
for each program scope.
This \newterm{dictionary notional machine}~\cite{tfk:sigcse:2018notional}
may be taught formally or informally, or simply noticed by the students as they debug.

The HtDP curriculum begins with a pure-functional language and an
explicitly-presented \newterm{substitution notional machine}~\cite{tfk:sigcse:2018notional}
called \newterm{Beginning Student Language} (or \newterm{BSL}, for short).
HtDP and BSL have been used both in the classroom
and recently for studies of the benefits of explicit teaching of BSL~\cite{fkt:sigcse:2017sma,tfk:sigcse:2018notional}.

Haverford's ``3-2-1'' introductory CS curriculum~\cite{DW:sigcse:2005}
also provides early and explicitly focus on notional machines.
This curriculum differs from HtDP in a number of ways.
In particular, the differences between BSL and
the 3-2-1 substitution notional machine (referred to as \newterm{the 3-2-1 SNM} hereafter)
make the latter more suitable for bridging to proof.
Specifically, the 3-2-1 SNM can demonstrate, in a single framework, both
\textit{execution} of code for a specific set of inputs
and \textit{deduction} of an abstract property of the code,
\ie, a property that is true for any input.
The 3-2-1 curriculum also contains formal verification examples that are presented
in the structure and language of mathematical proof,
though these are not necessary for the primary theme of connecting deduction and execution.

Note that the 3-2-1 curriculum lacks a number of appealing features of HtDP,
such as a complete textbook,
formally-defined execution models for all its notional machines,
and, until recently, tool support
(HtDP's execution models have strong support in the DrRacket environment).
However, the \orca software tool~\cite{osera-blocks-2017} can be configured to support
the 3-2-1 curriculum,
and the development of this tool has spurred a re-examination of the core 3-2-1 material,
to take advantage of \orca and streamline the pedagogy in other ways.

\orca is also influencing the development of the discrete mathematics curriculum at Grinnell College, which uses code reasoning to motivate logic and deductive reasoning.
In this context, students are already acquainted with introductory programming concepts but are (a) skeptical of the utility of formal reasoning in programming and (b) unsure of their own abilities as mathematicians.
Code reasoning in discrete mathematics gives students a concrete connection between what they already know---programming---and what they are about to learn---formal logic and reasoning, and hopefully boosts their confidence in tackling the material.

%This use of a common tool creates the potential for a combined introduction to programming and mathematics that relies on a single tool.
%This would further strengthen the message that these two fields share a common approach to reasoning.

This article presents the current state of our work to use these two courses to
build a pedagogical bridge between CS1 and discrete math,
with both courses emphasizing the relevance of formal deduction and proof in thinking about computation.
We have a coherent curriculum that introduces computation in CS1
(see \autoref{sec:substitution})
in a manner that supports deductive reasoning and all elements of direct and indirect proof
(\autoref{sec:symbolic-exec}),
and similarly introduces proofs as a way to reason about code, in discrete mathematics (\autoref{sec:discrete-mathematics}).
These courses anchor both ends of the pedagogical bridge between coding and mathematics,
and this connection is made more evident by our use of a single software tool,
\orca (\autoref{sec:orca}),
and a shared set of \orca-supported examples for both courses (\autoref{sec:integrating}).
Our informal observations about student responses to earlier versions of this curriculum
are included throughout the aforementioned sections,
but we have not (yet) performed any formal field studies of this curriculum.
\TRvsCCSC{
We present a list of questions that we would like to answer about the benefits of this curriculum (\autoref{sec:experiments}),
though actual studies are left for our future work.
}{}

The framing, and the associated curricula, are novel in a number of ways:
\begin{itemize}
    \item they are presented with a standard high-demand programming language, without requiring departure from common idioms;
    \item they connect computation, deduction, and proof more directly than other curricula such as HtDP,
          while being conceptually simpler than the previous 3-2-1 curriculum
          (we've had to work to avoid accidental over-simplification through tools that automate too much of the work);
    \item they allow the introduction of standard mathematical conjecture/proof structure and terminology
          on the same example proofs that were framed as symbolic substitution-based execution,
          in either in CS1 or discrete math;
    \item they are built around a single software tool, \orca,
          which supports both courses,
          and has led to a number of pedagogical simplifications in each.
\end{itemize}

\unused{
%%% Skip this description, as it's in the first bit of section 2,
%%% and I don't want to have to give similar detail on HtDP,
%%% and may not be ready to give it for Grinnel's DM?

For other aspects of the 3-2-1 curriculum,
% such as
%testing methods and test-suite development,
%imperative programming,
%object-oriented programming,
%code-reading and code review processes,
%and complexity analysis,
refer to~\citeN{DW:sigcse:2005}
or the course notes for the 3-2-1 curriculum's first semester~\cite{Wonnacott:fvte17}.
For other details about HtDP,
refer to~\citeN{HtDP},
the associated \texttt{https://htdp.org} web site,
or studies of this curriculum such as~\cite{fkt:sigcse:2017sma,tfk:sigcse:2018notional}.
}

% connecting deduction and execution in 3-2-1
\section{Language and Substitutional Execution in the 3-2-1 Curriculum}
\label{sec:substitution}

The 3-2-1 curriculum begins with a minimal set of programming language features
and explains the use of the its notional machines for those features.
After exploring software design and execution within that minimal set,
it moves on to explore a variety of other topics, including additional language features as needed.
Here, we summarize the language features of this initial subset,
the relevant elements of the 3-2-1 SNM,
and the pedagogical issues surrounding their use to introduce deductive reasoning about code.
The details below differ from the most recent (2017) course notes~\cite{Wonnacott:fvte17}
in several ways,
including formatting,
choice of example,
and, most significantly, some of the specifics of the substitution rules.
We plan to integrate this newer treatment into the next edition.
Other aspects of 3-2-1's first semester,
such as
testing methods and test-suite development,
imperative programming,
code-reading and code review processes,
and complexity analysis,
are not currently undergoing rapid change;
refer to~\citeN{DW:sigcse:2005} or ~\cite{Wonnacott:fvte17} for more details about them.

Haverford has recently moved to using a subset of Python 3.6 for CS1
and Java for CS2 (the latter introduces class definitions and inheritance/interfaces).
Previous versions of the 3-2-1 curriculum had employed a single language for the full first year,
specifically Python 2, Python 3.6, or C++.
The main language requirements for the first semester are
support for multiple paradigms (imperative, pure-functional, and object-oriented features of library objects),
statically-scoped local variables,
and a conditional expression (such as the \ci{? :} operation in C++ and Java).
\unused{
    We also have a strong preference for static checking,
    to reduce both debugging time and questions about why we manually check some things but not others.
}
Regardless of language choice,
we employ a pure-functional subset in the first half of the first semester,
so that the 3-2-1 SNM and the informal dictionary notional machine
can be presented with equal ease.

This curriculum could also be employed with a language that places a greater emphasis on pure-functional style,
such as Haskell, ML, or Scheme/Racket.
% (preferably, statically-typed Racket).
However, our goal is to demonstrate the thinking techniques in the student's comfort/interest zone,
which is often defined as ``a language that will get me a job''
or ``a language I've heard about from a friend'',
rather than ``a language my teacher thinks is cool''.

The first semester's path from execution to deduction relies on language and SNM features
for integer, (immutable) string, and Boolean values,
all manipulated via functions containing
\code{return} statements and potentially
\code{if} statements/expressions or local variables.

%%%%%%%%%%%%%%%%%%%%%%%%%%%%%%%
%%%%%%%%%%%%%%%%%%%%%%%%%%%%%%% Integers
%%%%%%%%%%%%%%%%%%%%%%%%%%%%%%% 
\subsection{Arithmetic and Algebra on Integers}
\label{sec:subst-arith}

The 3-2-1 curriculum introduces program execution as a problem that can be solved in several ways,
illustrating the substitution and dictionary notional machines, in turn,
on specific code examples such as the code in Figure~\ref{fig:ints-code}.
\begin{figure}[tb]
  \input{CodeInts/exec_2a_2b_code.tex}
\caption{Code for an early example of execution.}
\label{fig:ints-code}
\end{figure}
\begin{figure}[tb]
  \centering 
  \includegraphics[width=\linewidth]{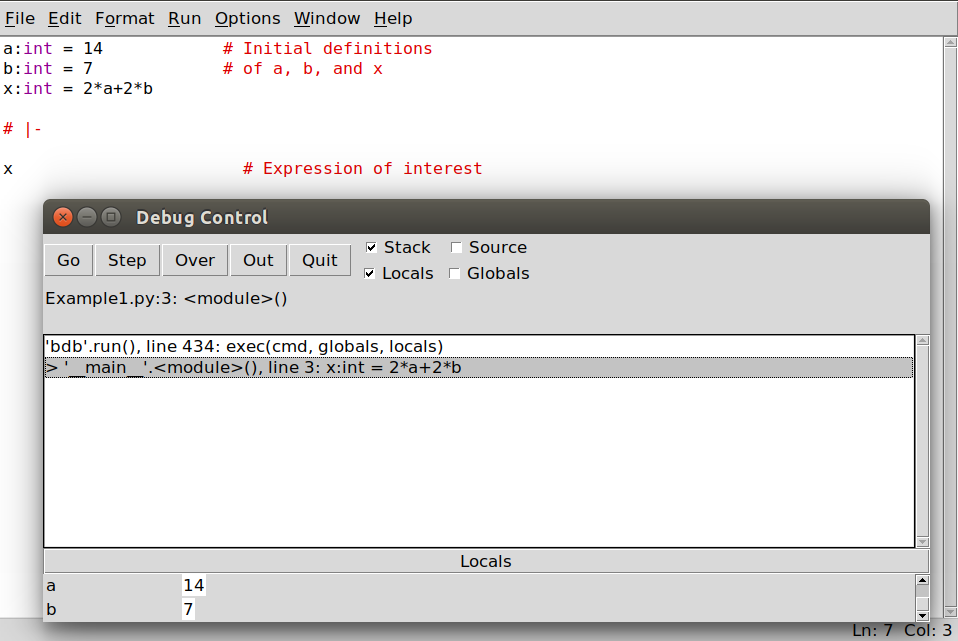}
\caption{Executing figure~\ref{fig:ints-code} in the Idle debugger.}
\label{fig:ints-idle}
\end{figure}
\begin{figure}[tb]
  \centering
  \includegraphics[width=\linewidth]{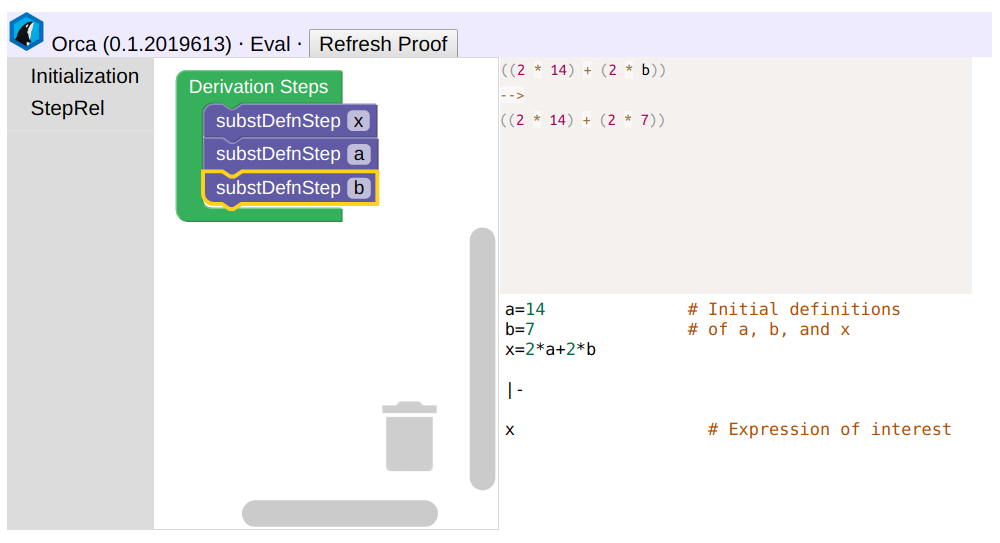}
\caption{Executing figure~\ref{fig:ints-code} in \orca.}
\label{fig:ints-orca}
\end{figure}
The dictionary notional machine is illustrated by single-stepping Figure~\ref{fig:ints-code} with a standard debugger.
Figure~\ref{fig:ints-idle} shows a debugging session with the debugger of Idle Python IDE; 
as we are about to execute line 3, \pyi{a} and \pyi{b} have been defined,
but \pyi{x} has not.
The 3-2-1 SNM can be illustrated by the instructor at the board, if necessary,
or with \orca, as in Figure~\ref{fig:ints-orca}.
The central pane of the \orca window shows the sequence of substitutions;
the user can use the mouse to select and drag substitution rules,
to add, remove, or reorder this sequence.
\orca's lower-right pane shows the definitions,
and the upper right shows the expression before and after
application of the highlighted rule (in this case, substituting the definition of \pyi{b}).
For later reference, 
we record the progress of this machine with a two-column sequence
such as the one shown in Figure~\ref{fig:ints-arith}.
(Note that some of the names of substitution rules
have not been updated in \orca yet;
SubstDefnStep will be renamed name-to-def in the next update,
for consistency with the terminology below.)

Note that Figure~\ref{fig:ints-code},
and the relevant examples in our curriculum (and below),
are formatted to work with \orca,
\ie, as a sequence of definitions followed by a commented turnstile (\pyi{|-}) mark and then an expression of interest. % using \pyi with a # does something inlovling LaTeX macros that produces two # symbols or an error :-(
\begin{figure}[tb]
  \centering
  
\begin{tabular}{l|lp{3.0cm}}
  \tmverbatim{\begin{tabular}{l}
    x
  \end{tabular}} & \quad & \ \\
\dwln \dwupl & & \dwupr name-to-def (\tmverbatim{x})\\
  \tmverbatim{\begin{tabular}{l}
    2*a+2*b
  \end{tabular}} &  & \ \\
\dwln \dwupl & & \dwupr name-to-def (\tmverbatim{a})\\
  \tmverbatim{\begin{tabular}{l}
    2*14+2*b
  \end{tabular}} &  & \ \\
\dwln \dwupl & & \dwupr name-to-def (\tmverbatim{}\tmverbatim{b})\\
  \tmverbatim{\begin{tabular}{l}
    2*14+2*7
  \end{tabular}} &  & \ \\
\dwln \dwupl & & \dwupr arithmetic\\
  \tmverbatim{}\tmverbatim{}\tmverbatim{\begin{tabular}{l}
    42
  \end{tabular}} &  & \ 
\end{tabular}

\

\caption{Complete substitution sequence for figure~\ref{fig:ints-code}}
\label{fig:ints-arith}
\end{figure}
The sequence of substitutions starts with the expression below
the turnstile (which is detected automatically by \orca),
and proceeds by application of substitution rules (in the right column).

For Figure~\ref{fig:ints-arith},
we need only two substitution rules:
\begin{description}
    \item[arithmetic] Use the rules of arithmetic operations (e.g., \pyi{+} and \pyi{*}),
                      including operator precedence,
                      to produce an equivalent value or expression.
    \item[name-to-def]  For a variable defined \pyi{var = expr},
                      replace occurrences of \pyi{var} with \pyi{expr}.
                      Note that we only apply the 3-2-1 SNM
                      to code written in a ``pure'' subset of Python
                      (see Section~\ref{sec:subst-body}),
                      to ensure this process produces the same result as the standard Python interpreter.
                      Local variable definitions can be removed once
                      and the variable is no longer used
                      (\eg, after the uses have been replaced by this rule)
                      and the defining expression is \textit{safe}
                      (\eg, does not contain a term like \pyi{a/b} for which \pyi{b} might be 0;
                      see Section~\ref{sec:subst-vs-algebra} for details).
\end{description}
When the 3-2-1 SNM is executed by hand,
the rules are selected from the course notes (or, on exams, from a provided summary).
Application of these rules,
like any other form of program execution,
is easier and less-error prone when performed by a machine,
\eg, with \orca.

The 3-2-1 SNM
gives the user a number of choices during execution,
including the order in which rules are applied.
For example, the code of \ref{fig:ints-code} could also be executed in the order shown in Figure~\ref{fig:ints-arith2},
in which we have chosen to wait as long as possible to substitute the definition of \pyi{a}.
\begin{figure}[tb]
  \centering
  
\begin{tabular}{l|lp{3.0cm}}
  \tmverbatim{\begin{tabular}{l}
    x
  \end{tabular}} & \quad & \ \\
\dwln \dwupl & & \dwupr name-to-def (\tmverbatim{x})\\
  \tmverbatim{\begin{tabular}{l}
    2*a+2*b
  \end{tabular}} &  & \ \\
\dwln \dwupl & & \dwupr name-to-def (\tmverbatim{b})\\
  \tmverbatim{\begin{tabular}{l}
    2*a+2*7
  \end{tabular}} &  & \ \\
\dwln \dwupl & & \dwupr arithmetic\\
  \tmverbatim{\begin{tabular}{l}
    2*a+14
  \end{tabular}} &  & \ \\
\dwln \dwupl & & \dwupr name-to-def (\tmverbatim{a})\\
  \tmverbatim{\begin{tabular}{l}
    2*14+14
  \end{tabular}} &  & \ \\
\dwln \dwupl & & \dwupr arithmetic\\
  \tmverbatim{}\tmverbatim{}\tmverbatim{\begin{tabular}{l}
    42
  \end{tabular}} &  & \ 
\end{tabular}

\

\caption{Substituting a definition later for figure~\ref{fig:ints-code}.}
\label{fig:ints-arith2}
\end{figure}
The ``arithmetic'' rule is usually used in its default mode, to produce a single number,
but the user can request other equivalent expressions,
\eg, turning \pyi{2*14} into \pyi{4*7} or \pyi{14+14}.
Note that it does not allow illegal steps,
\eg, replacing the \pyi{14+2} with \pyi{16} in \pyi{2*14+2*7}.

\unused{
        *****************************************************************************
        *************************  Thinkning thimking thinkking  *********************
        *****************************************************************************
        
        Need a single place to discuss code we aren't scared of, and concise terms suggestive of what we mean.
        
        Scary things:
        * type errors (mypy)
        * non-universal builtins or trusted functions (x/y, power(x, y), with out substituton of y or 'if' on y)
        * un-trusted functions (e.g., power, before we've proved it)
        
        Note that this will involve information from the context (which will all eventually become if's)
        
        Do this discussion last, in 'comparison to algebra', but get the names right:
        
        Terms:
        * 'trusted': A function/operation is trusted if it's part of Python Language (+, **, /) or Library (math.pow) or we've verified it. This means that, if we call it \textit{and its precondition holds}, it will return an answer that fits the postcondition, neither running forever, throwing an exception, nor just giving the wrong answer
        * 'safe': Code/subexpression that will produce some answer, terminating without an exception, i.e., code is trusted and only asked to do things it says it can do
        + (informally) 'if they are appropriately safe', to hint (until 'discertionary' and 'comparison w/ algebra') about non-strict elements
        
        trust is about the operation/function (e.g., I trust my brother)
        but safety is in context: which is (currently) the conjunction of all if's conditions (I trust my brother, but it is not safe to be in a plane he's flying, because he does not know how to fly a plane) ... is "conjunction of if's" sufficient, or at least subsuming all things in course note?
        
        F17 notes state (P. 66, section 5.2.3) that we ``know''
        A Language rules (OK)
        B Rules for basic types (int, bool ... could include list and string, but F17 did not)
        C the precondition of the function being proved (now the outermost ``if'' ... should mention that vars there are assumed safe)
        D if's (mention here ... note that we're glossing over knowing x != 0 in the else of an 'if x==0')
        E "Rules from the step we're proving" (seems to me that C is part of this; E described in terms of progress) --> NOW OK, I THINK
        F convert the informal spec to formal, e.g. 'length of hypotenuse' to 'c s.t. c^2 = a^2 + b^2'

        note: in this course, only trusted things are safe, though in principle we could have something we know will terminate but we don't know that it will meet its postcondition, and might want to call that safe.

        *****************************************************************************
        *****************************************************************************
        *****************************************************************************
}
Of central importance to our deduction/proof pedagogy
is our ability to apply this rule to expressions that contain variables,
\eg, replacing \pyi{1+(x-1)} with \pyi{x}, or replacing either \pyi{x==x} or \pyi{x+y == y+x} with \pyi{True},
as long as the terms being manipulated are \textit{safe},
\eg, we can't simplify \pyi{x+y == y+x} to \pyi{True}
if \pyi{x} is a string and \pyi{y} a Boolean,
or if x is defined with an unsafe expression such as \pyi{a/b}, where \pyi{b} could be 0.
For details about safety, see Section~\ref{sec:subst-vs-algebra}.
%have the correct type,
%their defining expressions do not contain un-trusted code or errors,
%and any value of that type is acceptable (\ie, \code{(2*x)/x} cannot be rewritten as \code{x}).
% As we shall see in Section~\ref{sec:symbolic-exec},
% this ability to control execution order and perform symbolic execution
% is the key to our approach for relating code execution to proofs.

%%%%%%%%%%%%%%%%%%%%%%%%%%%%%%%
%%%%%%%%%%%%%%%%%%%%%%%%%%%%%%% String, Bool, if
%%%%%%%%%%%%%%%%%%%%%%%%%%%%%%% 
\subsection{Rules for Booleans and Strings}
\label{sec:subst-if}
\begin{figure}[tb]
  \centering   % TWO PINTS OF LAGER AND A PACKET OF CRISPS, PLEASE!
  \input{CodePunct/exec_punctuate1.tex}
\caption{Recommending punctuation, statement form.}
\label{fig:punct1-code}
\end{figure}\begin{figure}[tb]
  \centering   % TWO PINTS OF LAGER AND A PACKET OF CRISPS, PLEASE!
  \input{CodePunct/exec_punctuate1f.tex}
\caption{Recommending punctuation, expression form.}
\label{fig:punct1f-code}
\end{figure}

Substitution rules for string expressions
can be presented to the students with varying levels of formality;
we tend to use informal language and references to Python documentation,
\eg, saying ``when \pyi{s1} and \pyi{s2} are strings,
\pyi{s1+s2} can be replaced with a string of all the symbols from \pyi{s1} followed by the symbols from \pyi{s2}.''
Our course examples do not currently require the simplification of expressions
that still contain un-evaluated string variables.
In principle, these could be included if we added axioms and supported them in \orca.

Boolean expressions can be simplified via the usual rules of Boolean algebra,
assuming the terms are specific Boolean values or otherwise appropriately safe.
We discuss both the Python's \pyi{if} statement and \pyi{if} expression,
though typically we present code in the former,
as in Figure~\ref{fig:punct1-code},
since this seems to be more common in Python usage.
However, students are allowed to use either form in their code,
as both seem to be within the idioms accepted by Python programmers
(especially for simple expressions).
When \pyi{if} is used in statement form,
we restrict it to cases that can easily be converted to expressions,
and switch to the expression form before performing other substitutions.
This step is automatic in \orca.

Converting \pyi{if}'s to expression form lets us
express all intermediate substitution steps as legal Python code
without requiring even less idiomatic forms
such as applications of 0-argument anonymous functions containing \pyi{if} statements.
For example, Figure~\ref{fig:punct1f-code} shows an expression-form variant of Figure~\ref{fig:punct1-code}.
Our restrictions on \pyi{if}s also rule out code with certain mistakes,
such as failure to return a value in some branches of an \pyi{if/else},
or use of a variable that is not always initialized;
we hope to deploy the checker developed for \orca
along with the \code{mypy} type checker,
to catch such errors before student code is even run.
%
% will be much better to catch these at run-time than program-start time
%
%% (Note that we are considering using \orca as a checker before running any student code
%% for the ``pure functional'' coding exercises
%% to catch coding mistakes such as omitted 'else'.)

Once all \pyi{if} statements have been converted to expressions,
all necessary substitutions can be handled by the following three rules:
\begin{description}
    \item[if-True]  An expression of the form $(X$ \pyi{if True else} $Y)$ can be replaced with $(X)$
    \item[if-False] An expression of the form $(X$ \pyi{if False else} $Y)$ can be replaced with $(Y)$
    \item[if-irrelevant] An expression of the form $(X$ \pyi{if} $C$ \pyi{else} $X)$ can be replaced with $(X)$, regardless of whether $C$ is true or false.
%%%
%%% \fixMe{Dave}{Do we want to think about strictness here? elsewhere?}
%%% WTH knows, for Python ... P-M thinks the C++11 standard says that
%%%   the data-race section (Sec. 1.10) implies the optimizer can do this sort of thing
%%%
\end{description}
These rules, like the rules of Section~\ref{sec:subst-arith},
can be applied despite the presence of variables,
so \pyi{(x if True else e)} can be turned into \pyi{x} without knowing precisely what expression \pyi{e} is.
Unlike other rules, they allow some unsafe expressions: only the $C$ term need be safe, above.
An \pyi{if} expression can also implicitly define a value for a variable,
\eg, \pyi{a} can be replaced with \pyi{b+c} in the $X$ clause of $(X$ \pyi{if a==b+c else} $Y)$.

%%%%%%%%%%%%%%%%%%%%%%%%%%%%%%%
%%%%%%%%%%%%%%%%%%%%%%%%%%%%%%% User-defined functions
%%%%%%%%%%%%%%%%%%%%%%%%%%%%%%% 
\subsection{User-defined Functions and Local Variables}
\label{sec:subst-body}
\begin{figure}[tb]
  \centering
  
\begin{tabular}{l|lp{3.0cm}}
  \tmverbatim{\begin{tabular}{l}
    recPunct("What is it")
  \end{tabular}} & \quad & \ \\
\dwln \dwupl & & \dwupr name-to-body\\
  \tmverbatim{\begin{tabular}{ll}
    ( & "What is it"+'?'\\
    if & \tmtextit{"What is it"[0:4]}=='What' else\\
    & "What is it"+'.'\\
    ) & 
  \end{tabular}} &  & \ \\
\dwln \dwupl & & \dwupr string-arithmetic: \tmverbatim{[ ]}\\
  \tmverbatim{\begin{tabular}{ll}
    ( & "What is it"+'?'\\
    if & \tmtextit{\tmtextbf{'What'}=='What'} else\\
    & "What is it"+'.'\\
    ) & 
  \end{tabular}} &  & \ \\
\dwln \dwupl & & \dwupr string-arithmetic: \tmverbatim{==}\\
  \tmverbatim{\begin{tabular}{ll}
    ( & "What is it"+'?'\\
    if & \tmtextbf{True} else\\
    & \tmtextit{"What is it"+'.'}\\
    ) & 
  \end{tabular}} &  & \ \\
\dwln \dwupl & & \dwupr if-True\\
  \tmtextit{\tmverbatim{\begin{tabular}{ll}
    "What is it"+( & \tmtextbf{\tmtextit{'?'}}
  \end{tabular} )}} &  & \ \\
\dwln \dwupl & & \dwupr string-arithmetic: \tmverbatim{+}\\
  \tmverbatim{}\tmverbatim{}\tmtextbf{\tmverbatim{\begin{tabular}{l}
    "What it it?"
  \end{tabular}}} &  & \ 
\end{tabular}

\

\caption{Function calls, conditionals, strings.}
\label{fig:punct1}
\end{figure}
During the 3-2-1 curriculum's introduction to pure-functional coding,
students write idiomatic Python,
but using a subset that lets us easily convert each function to a single
(possibly conditional) return statement.
Previous versions of this curriculum required students to perform several
substitutions for a single function call,
but we are moving to emphasize a single rule that
relies on this conversion to hide the statement/expression dichotomy:
\begin{description}
    \item[name-to-body] An expression of the form \pyi{func(argList)} can be replaced with
                      the expression returned by the function \pyi{func},
                      after replacing all parameter variables with appropriate argument expressions from \pyi{argList},
                      substituting-away all local variables with the name-to-def rule,
                      and converting \pyi{if}'s to expression form (as discussed in Section~\ref{sec:subst-if}).

\end{description}

As with the name-to-def rule, we can only apply this rule if
the expressions used to define local variables,
and the expressions for \pyi{argList}, are safe.
Figure~\ref{fig:punct1} illustrates the use of our name-to-body and if-True rules
for the code of Figure~\ref{fig:punct1-code}.
The sequence for the code of Figure~\ref{fig:punct1f-code}
would be identical, except for the factoring-out of the \pyi{"What is it" +} sub-expression.

In the context of hand-applied substitution rules,
the new name-to-body rule could possibly be more error-prone than the rules of our previous curriculum,
in which students did a larger number of simpler substitutions.
However, the new approach is conceptually simpler,
as the old version of name-to-body introduced a temporary structure that was not legal Python code,
and the new rule does not contain such structures,
or force students to consider the statement/expression distinction,
or depart from idiomatic Python variable definitions.
By using an \orca, we hope to gain the benefits of conceptual simplicity
without the potential errors of a single step that is too complex for hand-execution.
At the instructor's discretion,
more details can be explored once students are comfortable with the big picture
(see Section~\ref{sec:subst-details}).

%%% function definitions, choice of notation:
%%% \begin{itemize}
%%%     \item traditional "def" notation
%%%     \item $\lambda$ (read as "function of", introduce via $ax^2+bx+c$ is a function of x; discuss functions of strings, user-preferences + access request $\rightarrow$ access granted or not, etc.)
%%% \end{itemize}
%%% with local variables, choice of notation in Python:
%%% \begin{itemize}
%%%     \item sequence of declarations, with restricted use (or conditional expressions defining variables in proper scope in either language) and automatic renaming during call
%%%     \item use $\lambda$ in python, if instructor prefers
%%% \end{itemize}

\begin{figure}[tb]
 {
 \newcommand{\nt}[1]{\textit{\textbf{#1}}}
 \newcommand{\der}{{$\rightarrow$}}
% \begin{center}
 \begin{tabular}{rl}
      \nt{func} \der & \pyi{def } \textit{f}\pyi{(}\textit{paramList}\pyi{) -> } \textit{returnType}\pyi{ :}\nt{b} \\
         \nt{b} \der & \nt{a}$^*$ \nt{r} \\
         \nt{r} \der & \pyi{return} \nt{e} \\
                   | & \pyi{if} \nt{e}\pyi{:} \nt{b}$_1$ \pyi{else:} \nt{b}$_2$\\
         \nt{a} \der & $v$ \pyi{:}\textit{type} \pyi{=} \nt{e} \\
                | & \pyi{if} \nt{e}\pyi{:} 
                       $v$ \pyi{:}\textit{type} \pyi{=} \nt{e}$_1$
                    \pyi{else:}
                       $v$ \pyi{:}\textit{type} \pyi{=} \nt{e}$_2$ \\
%                 | & \pyi{if} \nt{e} \pyi{:} \nt{a}$^*$ \pyi{else:} \nt{a}$^*$\\
      %\nt{expr} = & a variable, call, \pyi{if}-expr, or allowed arithmetic
         \nt{e} \der &     $v$
                  \;|\; $f$\pyi{(}\nt{e}$_1, ..., $\nt{e}$_k$\pyi{)}
                  \;|\; \nt{e}$_2$ \pyi{if} \nt{e}$_1$ \pyi{else} \nt{e}$_3$
                  \;|\; \nt{e}$_1$\pyi{(+)}\nt{e}$_2$
 \end{tabular}
% \end{center}
 \\[3mm]
 \begin{flushleft}
 Where \textit{v} is a variable of the right type,
 \textit{f} is a function of the right type,
 \pyi{(+)} is a supported arithmetic/comparison/logical operation of the right type, and
 \textit{paramList} is a list of parameters, with types, in Python 3.6 notation.
 Note that, in the second \nt{a} rule,
 the two $v$'s must have the same variable name (and type),
 but otherwise no two definitions in a given function may have the same name,
 and no local variable may use the name of a global variable.
 Local variables can only be used by statements in the same indented block containing their \nt{a}
 (\ie, as in block-scoped languages).
\unused{
     Definitions inside an \pyi{if} (the second \nt{assign} rule)
     are merged with the list of definitions from the other side of the \pyi{if};
     the \pyi{if} and \pyi{else} clauses must define the same list of names
     (in the same order, with the same type),
     and those variables can be used by subsequent statements indented with the enclosing \pyi{if}.
     Except for conditional definitions allowed by the ``merge'' above,
     no name may have more than one definition in a given function body.
 }
 \end{flushleft}
% in implementation, we can also have some local's in one side or the other of the condition,
% as long as no name is used more than once in this way
 }
\caption{Function definition rules.}
\label{fig:decs-conds}
\end{figure}
For those curious about our restrictions on \pyi{if} statements and local variables,
Figure~\ref{fig:decs-conds} provides details.
Indentation, function argument/return types, etc.,
are as in Python 3.6, and not repeated in the figure.

%%%%%%%%%%%%%%%%%%%%%%%%%%%%%%%
%%%%%%%%%%%%%%%%%%%%%%%%%%%%%%% call-by-spec for library
%%%%%%%%%%%%%%%%%%%%%%%%%%%%%%% 
\subsection{Calls to Trusted Functions}
For calls to library functions,
we typically don't have, or don't want to read, the code of the function's body.
However, just as we trust Python's rules for, \eg, arithmetic and \pyi{if},
we can also trust the rules for the Python's library.
\label{sec:subst-spec}
\begin{figure}[tb]
  \centering   % TWO PINTS OF LAGER AND A PACKET OF CRISPS, PLEASE!
  \begin{center}
  \begin{minipage}{0.5\textwidth}
  \input{CodeInts/exec_pow_call.tex}
  \end{minipage}
  \end{center}
\caption{Code that calls a library function.}
\label{fig:pow-call}
\end{figure}
\begin{figure}[tb]
  \begin{center}
  \begin{minipage}{0.5\textwidth}
  
\begin{tabular}{l|lp{4.0cm}}
  \tmverbatim{\begin{tabular}{l}
    17+math.pow(a, b)
  \end{tabular}} & \quad & \ \\
\dwln \dwupl & & \dwupr name-to-def (\tmverbatim{a} and \tmverbatim{b})\\
  \tmverbatim{\begin{tabular}{l}
    17+math.pow(5, 2)
  \end{tabular}} &  & \ \\
\dwln \dwupl & & \dwupr name-to-spec (\tmverbatim{math.pow})\\
  \tmverbatim{ 17+}$5^2$ &  & \ \\
\dwln \dwupl & & \dwupr arithmetic\\
  \tmverbatim{}\tmverbatim{}\tmverbatim{\begin{tabular}{l}
    42.0
  \end{tabular}} &  & \ 
\end{tabular}

\

  \end{minipage}
  \end{center}
\caption{Executing the code of figure \ref{fig:pow-call}.}
\label{fig:exec-pow-call}
\end{figure}
% https://docs.python.org/3/library/math.html
So, for the code of Figure~\ref{fig:pow-call},
we can replace \pyi{pow(5, 2)} with $5^2$,
since the documentation of \pyi{pow} states ``Return x raised to the power y.''.
Figure~\ref{fig:exec-pow-call} illustrates these steps in the 3-2-1 SNM
(as before, a standard debugger can be used to illustrate the dictionary notional machine).
We are currently discussing the question of whether to express all
``expected values'' in Python notation vs. allowing other notation
(e.g., using \pyi{5**2} rather than $5^2$);
this may depend on an instructor's preference for discussing subtle
details of the types of, e.g. \pyi{math.pow} and \pyi{**} in Python.

\begin{figure*}[tb]
  \begin{center}
  \begin{minipage}{0.85\textwidth}
  
\begin{tabular}{l|lp{4.0cm}}
  \tmverbatim{ 17 + math.pow(a, b)} & \quad & \ \\
\dwln \dwupl & & \dwupr name-to-spec (\tmverbatim{math.pow})\\
  \tmverbatim{}\begin{tabular}{ll}
    \tmverbatim{17 + (} & \tmverbatim{a}$^{\mathtt{b}}$\\
    \tmverbatim{ \ \ \ \ if} & \tmverbatim{not(a<0 and not b.is\_integer())
    else}\\
    & \tmverbatim{ERROR}\\
    \tmverbatim{ \ \ \ \ )} & 
  \end{tabular} &  & \ \\
\dwln \dwupl & & \dwupr name-to-def (\tmverbatim{a} and \tmverbatim{b})\\
  \tmverbatim{}\tmverbatim{}\tmverbatim{}\begin{tabular}{ll}
    \tmverbatim{17 + (} & \tmverbatim{5}$^{\mathtt{2}}$\\
    \tmverbatim{ \ \ \ \ if} & \tmverbatim{not(5<0 and not 2.is\_integer())
    else}\\
    & \tmverbatim{ERROR}\\
    \tmverbatim{ \ \ \ \ )} & 
  \end{tabular} &  & \ \\
\dwln \dwupl & & \dwupr arithmetic\\
  \tmverbatim{}\tmverbatim{}\tmverbatim{}\tmverbatim{}\tmverbatim{}\begin{tabular}{ll}
    \tmverbatim{17 + (} & \tmverbatim{5}$^{\mathtt{2}}$\\
    \tmverbatim{ \ \ \ \ if} & \tmverbatim{not(False and not True) else}\\
    & \tmverbatim{ERROR}\\
    \tmverbatim{ \ \ \ \ )} & 
  \end{tabular} &  & \ \\
\dwln \dwupl & & \dwupr Boolean arithmetic\\
  \tmverbatim{}\tmverbatim{}\tmverbatim{}\tmverbatim{}\tmverbatim{}\tmverbatim{}\tmverbatim{}\begin{tabular}{ll}
    \tmverbatim{17 + (} & \tmverbatim{5}$^{\mathtt{2}}$\\
    \tmverbatim{ \ \ \ \ if} & \tmverbatim{True else}\\
    & \tmverbatim{ERROR}\\
    \tmverbatim{ \ \ \ \ )} & 
  \end{tabular} &  & \ \\
\dwln \dwupl & & \dwupr if-True\\
  \tmverbatim{}\tmverbatim{}\tmverbatim{\begin{tabular}{l}
    17 +
  \end{tabular}}\tmverbatim{5}$^{\mathtt{2}}$ &  & \ \\
\dwln \dwupl & & \dwupr arithmetic\\
  \tmverbatim{}\tmverbatim{}\tmverbatim{ 42} &  & \ 
\end{tabular}

\

  \end{minipage}
  \end{center}
\caption{Substituting variables late for figure \ref{fig:pow-call}.}
\label{fig:exec-pow-call-spec}
\end{figure*}
As in other situations, the order of substitutions for Figure~\ref{fig:exec-pow-call} is up to the user,
and many choices are possible.
In some cases, we may wish to substitute variable values after other steps,
as was done in Figure~\ref{fig:ints-arith2}.

When we apply a standard library function, or other trusted function,
to something other than specific value,
we must include Python's rules about the \textit{parameters}
as well as its rules about the \textit{result}.
For numeric \pyi{x} and \pyi{y}, the documentation cautions that,
when ``... x is negative, and y is not an integer then pow(x, y) is undefined''
(e.g., for $x=-1$ and $y=0.5$,
for which \pyi{math.pow} would need to compute $\sqrt{-1}$).
We express these rules about argument values via a second function-call rule:
\begin{description}
    \item[name-to-spec] An expression of the form \pyi{func(argList)},
                      for a function that is \textit{trusted} (\eg from a library)
                      can be replaced with\\
                      \pyi{   (} \textit{expected result} \pyi{if} \textit{argList is legal} \pyi{else ERROR )}\\
                      where \textit{expected result} is the result defined in the function's specification (\eg documentation),
                      and the \textit{argList is legal} test expresses the requirements for the argument values.
\end{description}
All elements of \pyi{argList} must be safe to apply name-to-spec (or name-to-body).
% We currently envision requiring all elements of \pyi{argList} to be safe,
% though another option would be to integrate safety requirements into the legality test.
%%%% It might be interesting/useful to let students convert   E --> E if (anything) else ERROR,
%%%%  to allow substitutional reasoning on E with local confidence of whatever assertion,
%%%%  which would then later have to be removed to get to a final answer... (hypothetical reasoning)
In software engineering terms,
the name-to-spec rule replaces the call with the value specified by the function's postcondition,
if and only if the precondition holds.
For variable values like $a=-1$ and $b=0.5$,
the sequence of substitutions for \pyi{math.pow(a, b)} would end up using our if-False rule to produce a final result of \pyi{ERROR},
indicating an error.
Note that we do not express type information in this (or other) rules,
as these are usually handled by other tools (\eg, \code{mypy}, for Python 3.6 code),
but the \pyi{math.py} function's second argument has type \pyi{float}---
it is only restricted to an integral value when the first argument is negative.

Note that this name-to-spec rule of the 3-2-1 SNM
can only be employed when the postcondition identifies a specific value,
but not with postconditions in other forms,
such as
``\pyi{sqrt(x)} returns \pyi{y} such that
  \pyi{y*y} is, within the limits of the floating-point approximation. equal to \pyi{x}''.
(See Section~\ref{sec:symbolic-additional} for more on this point.)
Examples for this section of the course are chosen to ensure that postconditions
either produce a result that can be computed some other way in Python (\eg, \pyi{pow(x, y) == x**y})
or correspond to some operation that can be handled by other axioms,
(\eg, \pyi{pow(x, y) ==}$x^y$, as above, for which we use mathematical rules about exponentiation).

%%%
%%%  Recursion
%%%

\subsection{Recursion}
Recursion follows naturally as an application of the previously-discussed rules.
As our name-to-body rule removes parameters and local variables,
it will not introduce intermediate forms with multiple variables with the same name.
Figure~\ref{fig:power-exy} illustrates the execution of the recursive
\pyi{power} function of Figure~\ref{fig:power-code}.

\begin{figure}[tb]
  \centering
  \input{CodePower/exec_p_7_2_code.tex}
\caption{Recursive power function, call thereof.}
\label{fig:power-code}
\end{figure}
\begin{figure*}[tb]
  \centering
  
\begin{tabular}{l|lp{3.0cm}}
  \tmverbatim{\begin{tabular}{l}
    power(\tmtextit{x}, \tmtextit{y})
  \end{tabular}} & \quad & \ \\
\dwln \dwupl & & \dwupr name-to-def (\tmverbatim{x} and \tmverbatim{y})\\
  \tmverbatim{\begin{tabular}{l}
    power(\tmtextit{\tmtextbf{3+2}}, 2)
  \end{tabular}} &  & \ \\
\dwln \dwupl & & \dwupr arithmetic\\
  \tmverbatim{\begin{tabular}{l}
    \tmtextit{power(\tmtextbf{5}, 2)}
  \end{tabular}} &  & \ \\
\dwln \dwupl & & \dwupr name-to-body\\
  \tmverbatim{\begin{tabular}{ll}
    ( & 5\\
    if & \tmtextit{2 == 1} else\\
    & 5 * power(x, \tmtextit{2-1})\\
    ) & 
  \end{tabular}} &  & \ \\
\dwln \dwupl & & \dwupr arithmetic\\
  \tmverbatim{}\tmverbatim{}\tmverbatim{}\tmverbatim{\begin{tabular}{ll}
    ( & \tmtextit{5}\\
    \tmtextit{if} & \tmtextit{\tmtextbf{False} else}\\
    & 5 * power(5, \tmtextbf{1})\\
    ) & 
  \end{tabular}} &  & \ \\
\dwln \dwupl & & \dwupr if-False\\
  \tmverbatim{}\tmverbatim{}\tmverbatim{}\tmverbatim{}\tmverbatim{}\tmverbatim{}\tmverbatim{\begin{tabular}{l}
    5 * \tmtextit{power(5, 1)}
  \end{tabular}} &  & \ \\
\dwln \dwupl & & \dwupr name-to-body\\
  \tmverbatim{}\tmverbatim{}\tmverbatim{\begin{tabular}{ll}
    5 * ( & \tmtextbf{5}\\
    \ \ \ \tmtextbf{if} & \tmtextbf{\tmtextit{1 == 1} else}\\
    & \tmtextbf{5 * power(x, 1-1)}\\
    \ \ \ ) & 
  \end{tabular}} &  & \ \\
\dwln \dwupl & & \dwupr arithmetic\\
  \tmverbatim{}\tmverbatim{}\tmverbatim{\begin{tabular}{ll}
    5 * ( & 5\\
    \ \ \ \tmtextit{if} & \tmtextit{\tmtextbf{True} else}\\
    & \tmtextit{5 * power(5, 1-1)}\\
    \ \ \ ) & 
  \end{tabular}} &  & \ \\
\dwln \dwupl & & \dwupr if-True \\
  \tmverbatim{}\tmverbatim{}\tmverbatim{\begin{tabular}{ll}
    5 * ( & \tmtextbf{5} \ )
  \end{tabular}} &  & \ \\
\dwln \dwupl & & \dwupr arithmetic\\
  \tmverbatim{}\tmverbatim{}\tmtextbf{\tmverbatim{\begin{tabular}{l}
    25
  \end{tabular}}} &  & \ 
\end{tabular}

\

\caption{Executing a recursive function.}
\label{fig:power-exy}
\end{figure*}

Note that Figure~\ref{fig:power-exy} is just one of many possible sequences;
as in our \pyi{2*a+2*b} example,
a number of orders are possible,
and all lead to the same final answer.

\subsection{Discretionary Additional Details}
\label{sec:subst-details}
At the instructor's discretion,
the course can also ``dig deeper'' into details of substitution,
including the internals of the name-to-body rule,
the relationship between name-to-def and name-to-body,
and our definition of ``safe'' expressions.

\subsubsection{Function definition and substitution}
\label{sec:subst-details-funcdef}
In Python, as with many other languages,
functions can be defined with the same notation used for other variables.
For example, Figure~\ref{fig:punct1f-code}'s definition is expressed
in the ``name:type = value'' notation as

\pyi{      recPunct:Callable[[str], str] = (}$\lambda$\pyi{ s: s + ('?' if s[0:4]=='What' else '.')}

We recommend some slightly-non standard prounciation for
instructors who choose to show this to beginners.
When $\lambda$ is used in this way,
we pronounce $\lambda$ ``a function from'',
and \pyi{:} ``to'',
so $\lambda$~\pyi{x}~\pyi{:}~\pyi{x+1} is read
``a function from $x$~to~$x+1$'',
and $\lambda$~\pyi{s}~\pyi{:}~\pyi{s+'?'}
is read
``a function from $s$~to~$s+$\pyi{'?'}''.
% Other variants, e.g.
%  \lambda = "a function that turns"
%  :       = "into"
% are also possible

By viewing functions in this way,
we can isolate one element of a function call,
by using the name-to-def rule
to turn \eg~\pyi{recPunct('Cool')}
into

\pyi{      (}$\lambda$\pyi{ s: s + ('?' if s[0:4]=='What' else '.')('Cool')}

Note that this requires that the function be expressed as a single expression;
we could recreate the name-to-body's technique of
automatically converting \pyi{if} statements with
and eliminating local variables name-to-def on any local variables,
but it turns out that we can also use $\lambda$ to
introduce local variables in Python.

\subsubsection{Local variable definitions as ``let'' expressions}
\label{sec:subst-details-localdef}
Local variable definitions can also be rewritten as \textit{expressions} by using $\lambda$,
and in this case we can keep the initializer visually near the name by using Python's default-argument mechanism.
So, ``\pyi{a=14;}~$...$\pyi{3*a}$...$'' is equivalent to
\pyi{(}$\lambda$~\pyi{a=14:}~$...$~\pyi{3*a}~$...$\pyi{)()},
assuming that the local variable is used according to the rules of Figure~\ref{fig:decs-conds}.
In this case, $\lambda$ is pronounced ``let'', \pyi{=} is pronounced ``be'' (or even ``be a name for''),
and~\pyi{:} is pronounced ``in'',
so the the above is pronounced ``let $a$~be~14 in $...~3*a~...$''.

Thus, we can retain the original structure of
a function containing local variables,
while still converting the body into a single returned expression.
Similarly, conversion of \pyi{if} statements into expression form
preserves all three components of an \pyi{if} statement,
albeit in a different order.

We currently envision supporting this structure-preserving conversion to function calls,
to let students explore execution in greater detail.
Students can continue to write code with idiomatic Python \pyi{if} statements
and local variables
(within the rules of Figure~\ref{fig:decs-conds}).
They can then use the 3-2-1 SNM to execute a call to an untrusted function via either
the name-to-body rule,
or by applying the name-to-def to a function name,
at which point \orca will automatically convert local variables and \pyi{if} statements into expression form,
to ensure the resulting code is legal Python.
The local variables and function parameters,
can then be handled with the name-to-def rule.
So, when name-to-def is applied to the \pyi{s} parameter in the \pyi{recPunct('Cool')} example
of Section~\ref{sec:subst-details-funcdef},
the value \pyi{'Cool'} replaces use of \pyi{s},
and \pyi{s} and its associated initializer are removed from the parameter and argument lists,
producing

\pyi{      (}$\lambda$\pyi{ : 'Cool' + ('?' if 'Cool'[0:4]=='What' else '.')()}

A separate func-to-body rule is then used to remove zero-parameter $\lambda$
and the associated \pyi{:} and \pyi{( )}, producing

\pyi{      ('Cool' + ('?' if 'Cool'[0:4]=='What' else '.')}

\noindent
and, using other rules, eventually \pyi{'Cool.'}.
Readers familiar with lambda calculus may note that application of name-to-def for each parameter,
followed by func-to-body,
correspond to $\beta$-reduction.
% We are also considering a name-to-def\textbf{s}
% rule to substitute-away all parameters in a single step
% or, perhaps, substitute-away any remaining parameters and then the $\lambda$).
% %%% Not worth discussing, due to not having many-parameter functions in our code base.

The use of these rules for a recursive function,
or for calls among functions that have identically-named local variables,
can introduce violations of Figure~\ref{fig:decs-conds}'s prohibition against having two variables of the same name in a single scope.
This problem can be avoided by using only carefully-chosen non-recursive examples
or by sticking to the name-to-body rule for function calls;
alternatively, the underlying issue can be addressed with an explicit discussion of scoping,
if the instructor wishes to address this in CS1.

\subsubsection{Scope of Variable Names}
Scoping rules can be quite confusing for beginning programmers,
but deferring this topic could potentially leave students with a partially-formed or flawed understanding of the meaning of program variables.
Some work has been done on possible pedagogy
for faculty who wish to engage the issue of scoping in an introductory course~\cite{fkt:sigcse:2017sma}.

Our basic name-to-body rule of Section~\ref{sec:subst-body}
steers clear of this issue,
by automatically removing all local variable and parameter names.
While correct manual application of the rule could be challenging,
we believe the issue of scoping could be avoided entirely
in a course that uses \orca to apply this rule to carefully-selected examples.

We believe an \orca implementation of the 3-2-1 SNM
could also be used to \textit{illustrate} this issue
while still preventing mistakes.
Specifically, application of the name-to-def rule for a function,
\eg, \pyi{recPunct} above or our recursive \pyi{power} function,
could rename variables as during the same automatic restructuring that converts to expression form,
retaining compliance with the rules of Figure~\ref{fig:decs-conds}.
(In lambda calculus terms, $\alpha$-conversion can be used to ensure that subsequent $\beta$-reduction is legal).
We have previously used subscripts to create fresh names while retaining recognizability~\cite{Wonnacott:fvte17}[Figure~3.8],
and this technique can be used to check student understanding by asking them students to
annotate images from which the subscripts have been omitted.

Alternatively, name-to-def rule could produce code that violates Figure~\ref{fig:decs-conds},
and subsequent applications of name-to-def would then have to limit substitutions to non-shadowed uses of the variable.
As above, this would likely be unworkable with manual application of rules;
\orca could ensure that the rules are followed correctly during automatic execution of the 3-2-1 SNM,
though the rules of scoping could well remain mysterious to students
unless this were paired with manual exploration of the name-to-def rule for small examples.

Note that the issue of variable scope also arises in other contexts,
such as discussions of execution
with the dictionary notional machine.
Significant class time, and space in the course notes,
is spent illustrating how to list parameter and local variable values
\textit{with each entry in a call tree},
to keep track of the separate values for each instance of a variable during recursive calls
(or other calls involving functions with identically-named parameters).
As with a substitution notional machine,
this issue can be largely avoided by careful selection of examples,
if the instructor hopes to postpone this issue until after the first semester.

Thus, the problem of scoping is a fundamental issue of understanding computation,
not a consequence of the use of a \textit{substitution}-based notional machine.

\subsubsection{Issues of termination, strictness, trust, and safety}
The references to ``safe'' code in the rules for the 3-2-1 SNM
help us to capture some ways in which computation differs from algebra.
A program that computes \pyi{x} and then returns \pyi{(x+1)-x} is \textit{not} equivalent to just 1
if the definition of \pyi{x} might run forever or throw an exception.

These issues can be largely ignored,
if we rely on \orca to correctly apply the substitution rules.
Alternatively,
the instructor can emphasize this point,
and present programming idioms that rely on the fact that Python's \pyi{and} operation is not commutative,
as discussed in Section~\ref{sec:subst-vs-algebra}.

\subsection{Comparison to Substitution in Beginning Algebra: Trust and Safety}
\label{sec:subst-vs-algebra}

The 3-2-1 SNM,
and, in particular, the two-column traces we present,
resemble proofs from introductory algebra.
However, our rules differ from those of algebra in their
emphasis on distinguishing \newterm{trusted} and \newterm{untrusted} functions,
and \newterm{safe} and \newterm{unsafe} expressions.
These considerations are vitally important when programming,
regardless of the use of formal verification,
and they arise in several elements of the 3-2-1 curriculum.

These issues can described informally to the students, as was done above,
and their enforcement left to \orca.
Alternatively, they may be presented in some detail.
A function or operation is \newterm{trusted} iff
any call involving appropriate arguments (\ie, meeting its precondition)
will terminate (rather than running forever)
without throwing an exception (as would \pyi{sqrt(-1)},
and with a correct result (\ie, one meeting its postcondition).
Initially, we trust the Python operators and library functions;
trust of user-defined functions is taken up in Section~\ref{sec:symbolic-exec}.

An expression is \newterm{safe} iff all variables it uses are safe
and all functions (and operations) it uses are trusted,
with each called only with argument expressions
that are themselves safe and must meet the function/operation's preconditions.
The evidence of the precondition's safety is normally found in an \pyi{if} expression
containing, or contained with, a potentially-unsafe expression.
So,
the expression
\pyi{2*sqrt( (a-b) )} is not, in isolation, safe,
but the expressions
\pyi{2*(sqrt(a-b) if a>b else sqrt(b-a))} and
\pyi{2*sqrt( (a-b) if a>b else (b-a) )} are safe.
Safety and trust are related, but not identical,
in that the former depends on context ---
I trust my brother, but I would not feel safe in a plane he was landing,
since he doesn't know how to land as plane.

%
% The substitution rules of the 3-2-1 SNM are designed to preserve the safety of the code being 
%
% Can't say that without thinking more about context, e.g., sqrt(a-b) becomes safe when a=10, b=1).
%

The rules of the 3-2-1 SNM follow from the \newterm{strictness} of elements of the Python language,
and are designed to keep us from transforming
expressions that could fail, or run forever, into a seemingly-correct answer.
For example, we can't turn \pyi{(0 * x)} into \pyi{0} and remove a definition of \pyi{x}
unless the defining expression is safe,
since preceding that calculation with \pyi{x=1//0},
or with \pyi{x=power(2,-3)} and the definition of \pyi{power} from Figure~\ref{fig:power-code},
would keep the program from ever reaching the \pyi{(0 * x)} step.
Our rules are somewhat stricter than is necessary for this purpose,
but our curriculum does not require that we explore this issue.

The asymmetric strictness rules of the Python language
cause our substitution rules to differ somewhat from traditional Boolean algebra,
for example logical conjunction (\ie, $\land$) is usually considered commutative in mathematics courses.
However, in code, one may idiomatically and safely write \pyi{if y>0 and x/y > z: ...},
understanding that this is fundamentally different from the potentially-incorrect \pyi{if x/y > z and y>0: ...}.

%% The differences between traditional algebra and substitution in code
%% can create either confusion and deeper understanding of both code and mathematics.
This emphasis on considering the validity of each definition and expression
could potentially give students a deeper understanding of mathematics.
For example,
even students who succeeded in high-school algebra can be perplexed
when confronting the classic joke ``algebra proof''
that begins $let~x=y$ and concludes $1+1=1$.
Though we have not studied this effect carefully,
our interactions with students suggest that
those who have been trained to consider the this safety issue
often resolve the ``$1+1=1$'' conundrum quickly once we suggest they consider safety.

\subsection{Comparison to Language and Substitution in HtDP}
\label{sec:subst-vs-BSL}
The curriculum most similar to 3-2-1 is HtDP,
though these both have similarities
to the classic ``Structure and Interpretation of Computer Progarms''\cite{sicp}.
The language and SNM of 3-2-1 differ from those of HtDP in a number of important ways,
however.

3-2-1 uses a subset of Python (or C++ or Java, in some semesters),
while HtDP uses a subset of Racket (Scheme, previously).
Scheme and Racket provide a better match for our pedagogies,
e.g. without distinct statement and expression forms of \pyi{if},
whereas other languages may provide a better match for student enthusiasm.

\unused{  % Not part of our main emphasis here; lots of curricula use static typing, and probably know about CS2
    Type declarations create notational differences from mathematics,
    and increase the set of concepts that must be presented in the very first code example.
    We have also found that they reduce debugging time and frustration,
    and play an even more fundamental role as student learn to create classes (in our CS2).
    Specifically, when we moved from C++ to Python 2 in both semesters,
    there was a great increase in student confusion between an \textit{object}
    and its \textit{representation},
    for example for a function \pyi{f} with a parameter of type \pyi{Point},
    a student might attempt a call \pyi{f(3, 5)} rather than \pyi{f(Point(3, 5))};
    in C++, the compiler's error message \textit{on the line of the call}
    brought the student's attention to the problem,
    and reinforced the concept that object and representation are different.
    Python 2 would provide a run-time error at the point of the call,
    but the student could eliminate \textit{that} error message by writing \pyi{f((3, 5))},
    and (in our experience) often remained confused about concepts
    while struggling to debug code in which the ``Point'' \pyi{(3, 5)}
    refused to respond to messages for class \pyi{Point}.
    Possibly the HtDP pedagogy steers students clear of this pitfall,
    but we have simply used \code{mypy} and/or Java to reassert the role of language tools.
}

The 3-2-1 SNM differs from the HtDP notional machine in one fundamental way:
the former does not enforce a specific substitution order,
whereas the latter evaluates all arguments, from left to right, before a function is called
(\ie, it uses left-to-right applicative-order evaluation).
The freedom to explore different execution orders
creates both opportunities and perils for students in the 3-2-1 curriculum.
Most significantly,
it supports the symbolic evaluation is fundamental to
our path from execution, to deduction of abstract properties,
and on to proofs with mathematical terminology and structure (see Section~\ref{sec:symbolic-exec}).

Exploration of different execution orders can potentially be either enlightening or confusing.
Students may be concerned that they are accidentally picking a ``wrong'' order,
or notice that they can create an unending chain of substitution by,
\eg, applying the name-to-body rule to the \pyi{power} function
of Figure~\ref{fig:power-code} without ever choosing another rule.
We believe (without \textit{yet} having made a formal study)
that this confusion can be addressed via instruction and/or tool support:
Students could start using \orca with an ``auto-step'' button
corresponding to left-to-right applicative order
(aligning this initial use of the 3-2-1 SNM with HtDP's BSL)
and then branch out with a different auto-step
(\eg, right-to-left applicative order, or even normal-order evaluation),
or jump to manual rule selection.
Alternatively, the instructor could show students how to achieve these orders as they choose steps.
The freedom to choose in different ways can let students explore issues of efficiency,
\eg, exploring the impacts of forward substitution or common subexpression factoring,
and thereby gaining a richer understanding of what can actually happens
when we take a source-code program and run it,
preparing them for discussions of memory models in later courses on concurrent programming,
or of optimization in high-performance computing,
as well as supporting theoretical foundations of computing.
Assuming, that is, they don't simply become confused.
We feel this topic deserves field study.

The 3-2-1 SNM's name-to-spec and name-to-body rules also
emphasizes the distinction between trusted and untrusted code,
while also showing the connection between student-written code and library functions
(which, are, of course, just functions someone wrote, but which we now trust).

After some use of this rule,
students may wonder ``can I trust my own function'',
which leads us directly to the deduction of abstract properties of code,
and the of mathematical proof techniques.

\section{Transitioning to Proofs}
\label{sec:symbolic-exec}
If we use the flexibility of the 3-2-1 SNM
to delay the substitution of some variable definitions,
we can use it to perform \textit{symbolic} execution of code.
For example, we could wait to substitute
the definition of \pyi{x} in Figure~\ref{fig:power-exy}
until after the use of the if-True rule.
In this case,
the if-True rule would produce \pyi{x * (x)} rather than \pyi{5 * (5)},
and after substituting \pyi{(3+2)} for \pyi{x} we would once again arrive at \pyi{25}..
% as Figure~\ref{fig:power-exy} did.
% just as Figures~\ref{fig:ints-arith} and \ref{fig:ints-arith2} produce the same answer.

\begin{figure*}[tb]
  \centering
  
\begin{tabular}{l|lp{7.0cm}}
  \tmverbatim{\begin{tabular}{l}
    power(x, \tmtextit{y})
  \end{tabular}} & \quad & \ \\
\dwln \dwupl & & \dwupr name-to-def: \tmverbatim{y}\\
  \tmverbatim{\begin{tabular}{l}
    \tmtextit{power(x, \tmtextbf{2})}
  \end{tabular}} &  & \ \\
\dwln \dwupl & & \dwupr name-to-body: \tmverbatim{power(x, 2)}\\
  \tmverbatim{\begin{tabular}{ll}
    ( & x\\
    if & \tmtextit{2 == 1} else\\
    & x * power(x, \tmtextit{2-1})\\
    ) & 
  \end{tabular}} &  & \ \\
\dwln \dwupl & & \dwupr arithmetic: $2 = 1$ and $2 - 1$\\
  \tmverbatim{}\tmverbatim{}\tmverbatim{}\tmverbatim{\begin{tabular}{ll}
    ( & \tmtextit{x}\\
    \tmtextit{if} & \tmtextit{\tmtextbf{False} else}\\
    & x * power(x, \tmtextbf{1})\\
    ) & 
  \end{tabular}} &  & \ \\
\dwln \dwupl & & \dwupr if-False\\
  \tmverbatim{}\tmverbatim{}\tmverbatim{}\tmverbatim{}\tmverbatim{}\tmverbatim{}\tmverbatim{\begin{tabular}{l}
    x * \tmtextit{power(x, 1)}
  \end{tabular}} &  & \ \\
\dwln \dwupl & & \dwupr name-to-body: \tmverbatim{power(x, 1)}\\
  \tmverbatim{}\tmverbatim{}\tmverbatim{\begin{tabular}{ll}
    x * ( & \tmtextbf{x}\\
    \ \ \ \tmtextbf{if} & \tmtextbf{\tmtextit{1 == 1} else}\\
    & \tmtextbf{x * power(x, 1-1)}\\
    \ \ \ ) & 
  \end{tabular}} &  & \ \\
\dwln \dwupl & & \dwupr arithmetic: $1 = 1$\\
  \tmverbatim{}\tmverbatim{}\tmverbatim{\begin{tabular}{ll}
    x * ( & x\\
    \ \ \ \tmtextit{if} & \tmtextit{\tmtextbf{True} else}\\
    & \tmtextit{x * power(x, 1-1)}\\
    \ \ \ ) & 
  \end{tabular}} &  & \ \\
\dwln \dwupl & & \dwupr if-True\\
  \tmverbatim{}\tmverbatim{}\tmverbatim{\begin{tabular}{ll}
    x * ( & \tmtextbf{x} \ )
  \end{tabular}} &  & \ \\
\dwln \dwupl & & \dwupr algebra: $x \cdot x = x^2$\\
  \tmverbatim{}\tmverbatim{}\tmtextbf{\tmverbatim{\begin{tabular}{l}
    x ** 2
  \end{tabular}}} &  & \ 
\end{tabular}

\

\caption{Executing \pyi{power(x, 2)} without a value for \pyi{x}.}
\label{fig:power-pf0}
\end{figure*}
However, if we stop before putting in the value of \pyi{x},
as is actually done in Figure~\ref{fig:power-pf0},
we learn something much more interesting:
when \pyi{power}'s second parameter is 2,
\pyi{power(x,y)} will produce $x^2$ for any safe expression $x$.
Thus, with a simple variation of the ways we use substitution rules we've already seen,
we have made a fundamental shift in our ability explore the meaning of code:
we are now reasoning about abstract properties, rather than execution for specific values.

\begin{figure*}[tb]
  \centering
  
\begin{tabular}{l|lp{3.0cm}}
  \tmverbatim{\begin{tabular}{ll}
    $(\nobracket$ & \tmtextit{power(x, y)}\\
    if & y > 0 else\\
    & ERROR\\
    ) & 
  \end{tabular}} & \quad & \ \\
\dwln \dwupl & & \dwupr name-to-body\\
  \tmverbatim{\begin{tabular}{ll}
    $(\nobracket$ & \begin{tabular}{ll}
      ( & x\\
      if & y == 1 else\\
      & x * \tmtextit{power(x, y-1)}\\
      ) & 
    \end{tabular}\\
    if & y > 0 else {\textdots}
  \end{tabular}} &  & \
  
  \ \\
\dwln \dwupl & & \dwupr name-to-spec-simpler\\
  \tmverbatim{\begin{tabular}{ll}
    $(\nobracket$ & \begin{tabular}{ll}
      ( & x\\
      if & y == 1 else\\
      & \begin{tabular}{ll}
        x * \tmtextbf{(} & x ** (y-1)\\
        \ \ \ \tmtextbf{if} & \tmtextit{y-1>0 and y>1 and y>y-1} else\\
        & ERROR\\
        \ \ \ \tmtextbf{)} & 
      \end{tabular}\\
      ) & 
    \end{tabular}\\
    if & y > 0 else {\textdots}
  \end{tabular}} &  & \ \\
\dwln \dwupl & & \dwupr algebra (consider-tests)\\
  \tmverbatim{}\tmverbatim{}\tmverbatim{\begin{tabular}{ll}
    $(\nobracket$ & \begin{tabular}{ll}
      ( & x\\
      if & y == 1 else\\
      & \begin{tabular}{ll}
        x * ( & x ** (y-1)\\
        \ \ \ if & \tmtextbf{True} else\\
        & ERROR\\
        \ \ \ ) & 
      \end{tabular}\\
      ) & 
    \end{tabular}\\
    if & y > 0 else {\textdots}
  \end{tabular}} &  & \ \\
\dwln \dwupl & & \dwupr if-True\\
  \tmverbatim{}\tmverbatim{}\tmverbatim{\begin{tabular}{ll}
    $(\nobracket$ & \begin{tabular}{ll}
      ( & x\\
      if & y == 1 else\\
      & x * ( \tmtextit{\tmtextbf{x ** (y-1)}} )\\
      ) & 
    \end{tabular}\\
    if & y > 0 else {\textdots}
  \end{tabular}} &  & \ \\
\dwln \dwupl & & \dwupr algebra\\
  \tmverbatim{}\tmverbatim{\begin{tabular}{ll}
    $(\nobracket$ & \begin{tabular}{ll}
      ( & \tmtextit{x}\\
      if & y == 1 else\\
      & \tmtextbf{x ** y} )\\
      ) & 
    \end{tabular}\\
    if & y > 0 else {\textdots}
  \end{tabular}} &  & \ \\
\dwln \dwupl & & \dwupr algebra \\
  \tmverbatim{\begin{tabular}{ll}
    $(\nobracket$ & \begin{tabular}{ll}
      ( & \tmtextbf{x ** y}\\
      if & y == 1 else\\
      & x ** y )\\
      ) & 
    \end{tabular}\\
    if & y > 0 else {\textdots}
  \end{tabular}} &  & \ \\
\dwln \dwupl & & \dwupr if-irrelevant\\
  \tmverbatim{\begin{tabular}{ll}
    $(\nobracket$ & ( \tmtextbf{x**y} )\\
    if & y > 0 else\\
    & ERROR\\
    ) & 
  \end{tabular}} &  & \ \\
\dwln \dwupl & & \dwupr \tmtextit{Q.E.D.}
\end{tabular}

\

\caption{Verifying correctness of \pyi{power}.}
\label{fig:power-pf1}
\end{figure*}
It seems natural to wonder whether we might be able 
to show that, for any value of \pyi{y},
\pyi{power(x, y)} will be $x^y$.
Unfortunately, that claim is not true \ldots
\pyi{power} does not produce this result if its second argument is,
for example, \pyi{-5}.
However, we can show that \pyi{power(x, y)} will be $x^y$
\textit{whenever \pyi{y} is a legal value},
\ie, a positive integer,
and \pyi{x} and \pyi{y} are safe
(\ie, in any application of the name-to-spec rule to \pyi{power}).
Thus, we will begin our exploration of the correctness of \pyi{power}
with an arbitrary call \pyi{power(x, y)}
that is placed within the context of this precondition check,
and work toward the form that would \textit{begin} the name-to-spec rule,
as shown on the first and last lines of Figure~\ref{fig:power-pf1}.

However, the tools we've seen so far won't let us connect the starting and ending lines of Figure~\ref{fig:power-pf1}.
We can apply the name-to-body rule once, as shown,
but additional use of name-to-body doesn't help after this point:
we'll never get to the \pyi{1==1} expression that unlocked our use of if-True in Figure~\ref{fig:power-pf0},
and thereby ended the substitution process.
Since name-to-body does not help, we might hope to try the name-to-spec rule.
We could provide definitions of \textit{expected result},
\ie, \pyi{power(b, e)} should return \pyi{b**e},
and \textit{argList is legal},
\ie, \pyi{e>0}.
However, we are not allowed to apply the rule,
since \pyi{power} is not (yet) a trusted function.

To complete the sequence of substitutions from \pyi{power(x, y)} to \pyi{x**y},
we'll need to introduce our third (and final) function-call rule.
This rule lets a recursive function trust \emph{itself} during this verification step,
but only when used for a \textit{simpler instance of the problem} that is solved by the function.
We treat the definition of the term ``simpler'' as many other have done before,
by introducing the idea of a \textit{progress expression}:
an expression that decreases in every recursive call,
and for which no recursive calls will be made when it is below some minimum value.
For the recursive algorithms presented in the 3-2-1 curriculum,
the progress expressions are typically extremely simple,
\eg, the length of a string parameter, or the value of an integer parameter
(such as \pyi{e} in our \pyi{power} function).
Later, this concept is generalized to include the de-construction of recursive data types,
when structural recursion is introduced for recursive data types in later in the 3-2-1 curriculum.
Our new rule is

\begin{samepage}
\begin{description}
    \item[name-to-spec-simpler]  An expression of the form \pyi{func(argList)},
                      when found within the body of \pyi{func} itself
                      during the check to see if we can trust \pyi{func},
                      can be replaced with\\
                      \pyi{  (} \textit{expected result}\\
                      \pyi{  if} \textit{argList is legal} \pyi{and}
                                  \textit{progr > pmin} \pyi{and} \textit{prog > progr'}\\
                      \pyi{  else ERROR )}\\
                      where \textit{progr} is the progress expression,
                      \textit{pmin} is a minimum value at or below which there will be no recursive call, and 
                      \textit{progr'} is the progress expression with the parameters for the recursive call.
\end{description}
\end{samepage}

As was the case with name-to-body and name-to-spec,
the we only apply the name-to-spec-simpler rule if
the expressions used to define local variables
and the expressions for \pyi{argList}, are safe
(assuming the variables introduced at the start of the process,
\eg, \pyi{x} and \pyi{y} of Figure~\ref{fig:power-pf1},
are safe).
\unused{
    As was the case for our latest name-to-body rule (of Section~\ref{sec:subst-body}),
    the actual application of the name-to-spec-simpler rule might prove overly-complex in the context of a hand-written sequence,
    but the conceptual foundation is clear:
    A regular call requires valid parameters; a recursive call requires valid parameters and progress toward ending the recursion.
    And, as noted, our latest set of rules is designed to improve our curriculum
    in the context of \orca.
}

Figure~\ref{fig:power-pf1} also raises interesting pedagogical questions about the degree of desirable automation,
when it comes to tracking what is known in a specific program context.
Note that the third simplification step, ``algebra'', simplifies \pyi{(y-1>0 and y>1 and y>y-1)} to \pyi{True}.
While the last part of that conjunction is true for any integer $y$,
the first two follow from the facts that $y>0 \land y \neq 1$, and the code contains is no definition of \pyi{y}.
The fact that $y>1$ could be performed automatically within \orca
(as per the ``context lookup'' example of the upcoming Figure~\ref{fig:orca-example-action}),
allowing students to perform the simplifications of Figure~\ref{fig:power-pf1}.
However, this frees students from some of the deductive steps that we'd like them to master.
Thus, we could choose to require another substitution rule to force students to identify the relevant conditions:

\begin{samepage}
\begin{description}
    \item[consider-tests]  Within an expression of the form  $(X$ \pyi{if} $C$ \pyi{else} $Y)$,
                           simplify $X$ based on the truth of $C$, or $Y$ based on $\neg C$.
                           Specifically, a boolean expression within $X$ or $Y$ can be replaced
                           with \pyi{True} or \pyi{False} if it is ensured (or contradicted)
                           by $C$ or a combination of conditions in nested \pyi{if}'s,
                           or a variable used within $X$ or $Y$ can be replaced with a single
                           value of the conditions restrict it to a single value.
\end{description}
\end{samepage}

For example, in Figure~\ref{fig:power-pf1},
the consider-tests rule, when applied to
the conditionals \pyi{y>0} (the outer \pyi{if} being true)
and \pyi{not y==1} (the inner \pyi{if} being false),
replaces \pyi{y-1>0} and the equivalent \pyi{y>1} with \pyi{True}.
Application of consider-tests to \pyi{y==1}
could replace occurrences of \pyi{y} with \pyi{1} in the true branch of \pyi{if y==1},
though in this code there are no such occurrences.
% ; alternatively, examples of this form could be handled by a rule 'consider-tests', where the 3rd step of Fig. 16 would be replaced/preceded by 'consider-tests [y>0 and not y==1]', which would simplify y>1 and y-1>0 to True, or (b) for 'consider-tests [y>0 and y==1]', which would simplify anything within the true/true branches of the if's, we would replace occurances of y with 1"

We can now complete the exploration of \pyi{power(x, y)},
by making use of the name-to-spec-simpler rule,
appropriate pre- and postconditions (from above; expressed as a specific ``expected result''),
and definitions of \textit{progr} and \textit{pmin},
\ie, \pyi{e} and \pyi{1},
and (optionally) a consider-tests rule.
This complete sequence is shown in Figure~\ref{fig:power-pf1}.
Thus, we have shown that
any use of \pyi{power(x, y)} that occurs where \pyi{y>0} can be transformed into \pyi{x**y}.
Once this has been established,
we say that we have \textit{verified} the correctness of the
\pyi{power} function, and we can add it to our set of trusted functions
and apply the name-to-spec rule to it.

In previous semesters, progress expressions and termination were handled as
ancillary steps in our system for formal verification,
which was forced to adopt a goal/argument structure like that of mathematical proof,
and was thus somewhat divorced from the substitution notional machine.
Once again, we see automated application of our rules
as the key to successful use of rules that capture the concepts at the appropriate level of abstraction.

The discussion above covers our core method of
introducing formal reasoning for pure-functional code
in the context of program execution.
We hope that every student in our introductory class will
be able to perform verification in this manner,
possibly by using \orca,
by the end of the first semester.

The instructor may or may not choose to introduce the term \textit{proof}
at this point,
depending on the class' comfort with mathematics,
and the instructor's feelings about applying a term usually reserved for the purely abstract
entities of mathematics to a real-world object, \ie, a Python function in the computer's memory.
Even if we verify our functions,
they can potentially fail due to problems in the language system, operating system, or hardware,
or even a user accidentally hitting the power button.
Regardless of the terminology chosen,
the students will have been given significant exposure to the process of deductive reasoning,
entirely within the structure of understanding code and computation.

\subsection{Discretionary Subtleties}
\label{sec:symbolic-additional}

As noted above,
Figure~\ref{fig:power-pf1} might be considered too simple for our purposes,
if we want students to understand the process of formal deduction.
In principle,
we could support exactly the sequence of Figure~\ref{fig:power-pf1},
or require something like the consider-context rule mentioned above,
or let students start without consider-context,
and then require it later.
Possibly, different approaches are best for different student populations.

Those familiar with the teaching of mathematics may already have identified
Figure~\ref{fig:power-pf1} as an inductive proof,
but our goal is to get any ``math-phobic'' students through this material
without forcing them to associate it with any prior negative experience
they may have had with mathematics or proofs
(in our experience, students sometimes seek out CS
because they are looking for a science that does not require mathematics).
For students who are comfortable with
the traditional conjecture/proof terminology and structure of mathematics,
including inductive proof,
the instructor can of course discretely make this connection earlier.

Whether or not students have succeeded with the details of induction before,
they may (appropriately!) become concerned about circular reasoning.
In such cases,
it may help to point out that it would be incorrect to define,
\eg, \pyi{power(5,2)} in terms of \pyi{power(5,2)},
but there is no problem defining \pyi{power(5,2)} in terms of \pyi{power(5, 1)}
(as long as \pyi{power(5, 1)} is not, itself, circularly defined).
This can lead to a general discussion of the role progress expressions and base cases for recursion,
in ruling out just this sort of situation.

%% ``But wait'', the students (and you) may wonder, at this point \ldots couldn't we make a shorter proof by substituting the specification \textit{on the first line}?
%% And, furthermore, couldn't we do that
%% \textit{even if the body of the function was wrong}?
%% It turns out that there are special rules for substituting the specification ... the function must either be \textit{trusted}, as was noted in the previous section,
%% or be a recursive call \textit{for a simpler problem instance}.
%%
%% In summary, the rules we introduce for our substitution notional machine,
%% combined with the ability of this machine to perform symbolic evaluation,
%% are sufficient for simple correctness proofs of non-recursive functions.
%% The addition of our new ``substitute the specification if we're making progress'' rule
%% lets us explore correctness proofs for single \textit{recursive} functions.
%%

If the instructor chooses to introduce the conjecture/proof structure,
and relate it to code verification such as in Figure~\ref{fig:power-pf1},
slight variations on our examples can be used to motivate mathematical sophistication.
In some cases,
it may be much easier to describe the \textit{properties} of a function's result,
than to specify a unique correct answer.
For example,
we might wish to say that a function returns a minimum-length string from a collection of strings,
without requiring that a particular value be produced if several strings have that minimum length.
Or, we might wish to describe the result of \pyi{sqrt(x)} as
``a non-negative value with the property that \pyi{sqrt(x)*sqrt(x)} gives back \pyi{x}
(within the limits of the real-number approximations)''.

Such cases are best addressed in the traditional mathematical structure of
``given these axioms and definitions, establish this result'',
rather than the ``substitute for the function call until you reach the expected result'' framework
of the 3-2-1 SNM rules presented here.
However, if/when the instructor wishes to make this change,
the process of formal deduction (via substitution) will already be familiar to the students.
Note that prior offerings of the 3-2-1 curriculum
used a different function-call rule,
and the verification system did not fit cleanly within the substitution framework.
We are optimistic that the rules given here will provide all students with
a clear connection between program execution and abstract reasoning about programs,
and that this abstract reasoning can then be translated into a proof structure
at the appropriate time, either CS1 or discrete math, as dictated by institutional and student needs.

Verification of imperative functions,
like those involving code with non-constructive postconditions,
can be expressed more naturally in the conjecture/proof framework,
should the instructor wish to reinforce the software engineering and verification elements
that were introduced in the functional programming exercises of the 3-2-1 curriculum.
This framework also extends naturally to data structures;
Liskov and Guttag provide an excellent treatment of this material in the context of Java
(or Clu) programming~\cite{aspd,aspdj}.

Instructors who wish to demonstrate code verification within this structure can switch to an appropriate
more-general logic within \orca for later work in (or after) CS1,
as discussed in Section~\ref{sec:discrete-mathematics}.

We suspect that many of the concepts of Sections~\ref{sec:substitution} and \ref{sec:symbolic-exec}
could be reformulated in terms of symbolic execution with a dictionary notional machine.
However, we have not explored the possibility of doing so,
as this framework is further from the usual framework of mathematical proof.

%%% P-M should be able to do this starting around May 20, leaving ~2 weeks to get it all integrated
\section{Symbolic Reasoning in Discrete Mathematics}
\label{sec:discrete-mathematics}
In contrast to Haverford's CS1 course, Grinnell does not introduce code reasoning in its introductory course.
Instead, we opt to introduce these elements in its discrete mathematics course, building on the functional programming core that it develops in CS1.
By introducing code reasoning into discrete mathematics course, we hope to address two issues we have seen with our students:
\begin{enumerate}
  \item Students don't understand how to construct a well-formed mathematical proof in later theoretical courses.
  \item Students view mathematical proof as an activity completed unrelated to their usual business of computer programming.
\end{enumerate}

In \emph{How to Prove It}, Velleman likens the structure of mathematical proof to computer programs~\cite{velleman_2006}.
A computer program is composed of smaller program elements according to a set of rules which govern which combinations of elements are valid.
Likewise, a proof is composed of smaller proof elements according to a set of rules
It is these rules which ought to be explicated in detail rather than acquired indirectly through mere exposure to examples.
Velleman and others follow this pedagogy of \emph{explicit proof education} by first introducing mathematical logic as the rule set for primitive proof formation and then explaining how these rules translate to real-world proofs.

Our pedagogy at Grinnell follows suit, first introducing logic as the basis for proof.
However, it differs from these other approaches in that it uses code reasoning as its domain for introducing proof rather than more traditional domains such as calculus or number theory.
For example, here is the definition of a function in OCaml, the functional language that we use in the course:
\begin{ocamlcode}
let andb (b1:bool) (b2:bool) =
  match b1 with
  | true  -> b2
  | false -> false
\end{ocamlcode}
a simple proposition we assert over that function
\begin{claim}
  $\forall \oi{b}.\,\oi{andb b false} = \oi{false}$,
\end{claim}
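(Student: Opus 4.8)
The plan is to prove the claim by exhaustive case analysis on the boolean \oi{b}. Since \oi{b} has type \oi{bool}, and \oi{bool} is inhabited by exactly the two values \oi{true} and \oi{false}, it suffices to verify the equation \oi{andb b false} $=$ \oi{false} under each of these two substitutions for \oi{b}. In the vocabulary of the substitution notional machine, each case proceeds by the analog of the name-to-body rule: once \oi{b} is a concrete constructor, we may unfold the definition of \oi{andb} and resolve the \oi{match}.

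For the first case, I would substitute \oi{true} for \oi{b}. Unfolding \oi{andb true false} selects the \oi{true} branch of the \oi{match}, whose body is the parameter \oi{b2}, which is here bound to \oi{false}; the case therefore reduces to \oi{false}. For the second case, I would substitute \oi{false} for \oi{b}. Unfolding \oi{andb false false} selects the \oi{false} branch, whose body is the literal \oi{false}, so this case also reduces to \oi{false}. Both branches yield \oi{false}, discharging the universally quantified goal.

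The one genuinely interesting step---and the one I expect the discrete-mathematics students to stumble over---is recognizing \emph{why} the case split is needed at all. As long as \oi{b} remains an abstract variable, the scrutinee of the \oi{match} is exactly that variable, so symbolic reduction stalls: no rule simplifies \oi{match b with ...} without first committing to a value of \oi{b}. This stall is precisely what motivates the proof principle of case analysis, \ie, reasoning from the disjunction \oi{b} $=$ \oi{true} $\lor$ \oi{b} $=$ \oi{false} by treating each disjunct separately. Framing the proof this way lets the instructor surface the logical rule of disjunction elimination on the very example where the symbolic execution got stuck, reinforcing the bridge between code reasoning and formal logic that the surrounding course is built around.
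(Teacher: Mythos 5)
Your proof is correct and follows essentially the same route as the paper's: exhaustive case analysis on \oi{b}, followed by unfolding \oi{andb} and resolving the \oi{match} in each case, with both branches reducing to \oi{false}. Your closing observation---that symbolic reduction stalls on an abstract scrutinee, which is what forces the case split---matches the paper's own framing of the example as motivation for case analysis on a universally quantified boolean.
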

that postulates that \oi{andb} always produces \oi{false} when its second argument is \oi{false}, and a proof of that proposition:
\begin{proof}
Because \oi{b} is a boolean, it can be either be \oi{true} or \oi{false}.
Suppose it is \oi{true}.
Then we have:
\begin{align*}
& \oi{andb true false} \\
\longrightarrow\;& \oi{match true with true -> false | false -> false} \\
\longrightarrow\;& \oi{false}
\end{align*}
And suppose it is \oi{false}.
Then we have:
\begin{align*}
& \oi{andb false false} \\
\longrightarrow\;& \oi{match false with true -> false | false -> false} \\
\longrightarrow\;& \oi{false}
\end{align*}
\end{proof}
where the steps of evaluation are justified by the standard small-step evaluation semantics we give for OCaml.

The core of this proof is simple substitutive evaluation, as in Section~\ref{sec:substitution},
but it can still be used to motivate formal logical reasoning, \eg, in this case by proving a universally quantified claim by assuming an arbitrary value \oi{b} and then performing case analysis on the possible values of \oi{b}.

\subsection{Inductive Reasoning}
\label{sec:dm-induction}

One kind of proof that students find especially difficult to master is inductive proof.
In our course, we address this problem by introducing \emph{structural recursion} over lists first, rather than mathematical induction over the natural numbers.
This leverages our students' prior experience with recursive functions from CS1 and gives immediate motivation for inductive reasoning: induction is the primary verification technique for pure, functional programs.

We motivate the need for induction by analyzing recursive functions in the context of the substitutive evaluation model described above.
For example, consider the \oi{append} function over lists:
\begin{ocamlcode}
  let rec append (l1:'a list) (l2:'a list) : 'a list =
    match l1 with
    | []       -> l2
    | x :: l1' -> x :: append l1' l2
\end{ocamlcode}
A property of \oi{append} is that it ought to respect the lengths of the lists.
That is:
\begin{claim}[Append Preserves Length]
  $\forall \oi{l1}, \oi{l2}.\,\oi{length}\;(\oi{append}\;\oi{l1}\;\oi{l2}) = \oi{length l1 + length l2}$.
\end{claim}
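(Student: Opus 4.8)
The plan is to prove the claim by \emph{structural induction} on the first list \oi{l1}, fixing an arbitrary second list \oi{l2} at the outset. Induction on \oi{l1} (rather than \oi{l2}) is forced by the shape of \oi{append}, which pattern-matches and recurses on its first argument; inducting on \oi{l2} would leave the recursive call unreduced and give us nothing to rewrite. This reinforces the pedagogical point of Section~\ref{sec:dm-induction}: the recursion structure of the function under study dictates the induction principle, so that induction is simply the verification counterpart of recursion.

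For the base case, \oi{l1} is \oi{[]}, and substitutive evaluation (as in Section~\ref{sec:substitution}) does all the work with no appeal to an inductive hypothesis. The left-hand side \oi{length (append [] l2)} reduces, by the first match clause, to \oi{length l2}, while the right-hand side \oi{length [] + length l2} reduces to \oi{0 + length l2} and hence to \oi{length l2}; the two sides coincide.

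For the inductive step, \oi{l1} has the form \oi{x :: l1'}, and we assume as the inductive hypothesis that \oi{length (append l1' l2) = length l1' + length l2}. Evaluating the left-hand side, \oi{append (x :: l1') l2} steps by the second match clause to \oi{x :: append l1' l2}, and the standard definition of \oi{length} rewrites this to \oi{1 + length (append l1' l2)}; applying the inductive hypothesis turns it into \oi{1 + (length l1' + length l2)}. The right-hand side, \oi{length (x :: l1') + length l2}, evaluates to \oi{(1 + length l1') + length l2}, and we must reconcile the two.

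The main obstacle --- modest though it is --- is precisely that final reconciliation: closing the inductive step requires reassociating \oi{1 + (length l1' + length l2)} into \oi{(1 + length l1') + length l2}, that is, appealing to associativity of addition on the natural numbers. Unlike every other step in the argument, this is \emph{not} justified by the evaluation semantics of OCaml but is a genuine arithmetic fact, so it must either be cited as a previously established lemma or, in a fully self-contained development, proved separately by its own induction. Flagging this boundary is itself pedagogically useful, since it shows students exactly where pure code reduction ends and mathematical reasoning about the computed results begins.
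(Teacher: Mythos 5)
Your proof is correct and follows essentially the same route as the paper's: structural induction on \oi{l1}, with substitutive evaluation discharging the base case and the inductive hypothesis closing the cons case. The only difference is one of care, not strategy---you make explicit the final appeal to associativity of addition when reconciling \oi{1 + (length l1' + length l2)} with \oi{(1 + length l1') + length l2}, a step the paper's proof passes over silently by asserting directly that consing one element onto \oi{append l1' l2} yields a list of length \oi{length l1 + length l2}.
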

In our curriculum, a first attempt at proving this claim involves symbolically evaluating $\oi{length (append l1 l2)}$.
Because \oi{l1} and \oi{l2} are universally quantified, we can't assume particular values for them.
Instead, we have to perform \emph{case analysis} on \oi{l1}---is \oi{l1} empty or non-empty?\footnote{%
  Note that case analysis on \oi{l2} does not get us anywhere because it is not the argument that is
  being pattern matched in the function.
}
\begin{itemize}
  \item If \oi{l1} is empty (\ie, \oi{l1 = []}), so the left-hand side of the equality evaluates to
    \[
      \oi{length (append [] l2)} \longrightarrow^* \oi{length l2}.
    \]
    and \oi{length [] + length l2 = length l2} on the right-hand side.
  \item If \oi{l1} is non-empty, then \oi{l1 = x :: l1'} for some element \oi{x} and (sub-)list \oi{l1'}.
    Evaluation yields:
    \begin{align*}
      & \oi{length (append (x :: l1') l2)} \\
      \longrightarrow^*\;& \oi{length (x :: append l1' l2)} \\
      \longrightarrow^*\;& \oi{1 + length (append l1' l2)}.
    \end{align*}
    The last step of evaluation is justified by the definition of \oi{length}---a list that is a cons of the form \oi{x :: l} has length \oi{1 + length l}.
\end{itemize}
While we are able to prove the empty case, the non-empty case leaves us with the goal:
\[
  \oi{1 + length (append l1' l2) = 1 + length l1' + length l2}.
\]
This is the original goal but with \oi{l1} unrolled by one element.
If we try to evaluate the program further, we simply continue unrolling \oi{l1}, but because \oi{l1} is universally quantified, we don't know when to stop!

We then introduce structural induction as the way to break this infinite chain of reasoning.
Structural induction then becomes \emph{case analysis} coupled with an \emph{inductive hypothesis} that we can invoke on sub-structures in our proof.
We note that this is sound as long as we know that our list is finite in size\footnote{%
  Which is an interesting discussion we bring up in its own right!
  The definition of an OCaml list coupled with the the langauge's strict semantics guarantees that values of the list type are always finite as long as our computations are terminating.
}, and our \oi{append} function only makes recursive calls to structurally smaller lists so that it is guaranteed to terminate.

A complete inductive proof of the claim otherwise proceeds normally:
\begin{proof}
  We prove the claim by induction on \oi{l1}.
  Consider the possible shapes of \oi{l1}:
  \begin{itemize}
    \item \oi{l1 = []}, then $\oi{length (append [] l2)} \longrightarrow^* \oi{length l2}$ and $\oi{length [] + length l2} \longrightarrow^* \oi{length l2}$.
    \item \oi{l1 = x :: l1'} and our inductive hypothesis states that $\oi{length (append l1' l2)} = \oi{length l1'} + \oi{length l2}$.
      Then:
      \begin{align*}
        & \oi{length (append (x :: l1') l2)} \\
        \longrightarrow^* & \oi{length (x :: append l1' l2)}. 
      \end{align*}
      From our inductive hypothesis, we know that \oi{append l1' l2} produces a list of length \oi{length l1' + length l2}.
      Because \oi{l = x :: l1'}, we know that we obtain \oi{l} by consing one more element onto \oi{l1'}, thus we know that:
      \[
        \oi{length (x :: append l1' l2)} = \oi{length l1 + length l2}.
      \]
  \end{itemize}
\end{proof}
Note that this proof is similar to a correctness proof of a recursive function
from Section~\ref{sec:symbolic-exec},
except that it is presented in a more mathematically traditional way (with prose and induction hypothesis rather than pre-/post-conditions).
The connection between the two styles is that our original claim is a post-condition of \oi{append} and the inductive hypothesis is how to apply this post-condition in the special case when we are \emph{currently trying to prove that the post-condition holds}.
The progress condition is implicit in the fact that we can only apply our induction hypothesis on structurally smaller lists which is in turn forced by our case analysis on \oi{l1}.

With these proof techniques firmly established with code reasoning, our discrete mathematics course transitions in its second half to a more traditional course exploring sets, functions, probability, and graphs.
In the future, we would like to identify and apply similar sorts of domains as we have done for logic and code reasoning to make our presentation of these topics as compelling and well-motivated to our students as possible.

\section{Tools to Support Symbolic Reasoning}
\label{sec:orca}
We have discussed our rationale for adopting code reasoning in our introductory programming and discrete math courses.
However, this approach is not without its downsides.
In particular, paper-and-pencil proofs, especially of this sort, impose a significant burden on the students and instructors:
\begin{itemize}
  \item \textbf{Students} must author logically consistent proofs without any feedback or ability to check their work short of submitting their assignment to the instructor.
  \item \textbf{Instructors} are tasked with manually checking all of these proofs and providing quality feedback to the students, a process that is difficult to scale up in the face of computer science's increasing enrollments.
\end{itemize}
The analogy we draw here is to programming: it is inconceivable that we would require our students to write programs on paper and then hand-check their execution after-the-fact without any software support.
In the realm of programming, we have compilers, interpreters, test suites, IDEs, \etc, that provide rich feedback to the student through the entire program development process and help automate the process of grading their work.
What software tools exist to help us manage proofs in the same way that these tools help us manage programs?

There are a multitude of such \emph{proof assistants}, software packages that help users author and check proofs.
Such tools have been used by working mathematicians for large scale developments such as the proof of the four color theorem~\cite{gonthier_2008}, the Feit-Thompson theorem~\cite{gonthier_2013}, the Kepler conjecture~\cite{hales_2017}, and full-stack verification of a compiler~\cite{leroy_2009}.
These tools have also been used in educational contexts such as graduate-level programming languages courses~\cite{pierce_sf1, pierce_sf2} and undergraduate discrete mathematics courses~\cite{greenberg_2019}.

\subsection{Discrete Mathematics with Coq}

In the first semester that we offered this new version of discrete mathematics, students worked purely on pen-and-paper.
In the second semester, the second author integrated the Coq Proof Assistant~\cite{coq-2004} into the code reasoning portion of their course described in~\autoref{sec:discrete-mathematics}.
Coq itself contains a typed, functional programming language, \emph{Gallina}\footnote{%
  Really, a full-spectrum dependently typed language although we only use the subset of the language that corresponds to OCaml: functions and algebraic datatypes with pattern matching.
}, so we translated our code examples to this language and had students prove properties of these programs in Coq directly.

To give a flavor of the differences, here is the \oi{append} proof as realized in Coq:
\begin{coqcode}
Theorem append_length : forall l1 l2,
    length (append l1 l2) = length l1 + length l2.
Proof.
  intros l1 l2.
  induction l1.
  + simpl. reflexivity.
  + simpl. rewrite IHl1. reflexivity.
Qed.
\end{coqcode}
Coq features \emph{proof tactics} which automate the generation of machine checked proofs.
Coq's basic tactics mimic what we would write on paper, \eg,
\begin{itemize}
  \item \coqi{intros l1 l2} introduces the universally quantified variables \coqi{l1} and \coqi{l2} into the proof as unknown values.
  \item \coqi{induction l1} performs induction over \coqi{l1}.
  \item \coqi{simpl} tells Coq to simplify the goal, equivalent to the $\longrightarrow^*$ stepping relation we used in the paper proof.
  \item  \coqi{rewrite IHl1} rewrites the goal according to our induction hypothesis.
  \item \coqi{reflexivity} discharges the proof obligation if the current proof goal is of the form \coqi{e = e} for some expression \coqi{e}.
\end{itemize}
As the student enters their proof, Coq reports the \emph{proof state} at every step.
\autoref{fig:proof-state} gives the proof state that Coq reports at the beginning of the non-empty case for \coqi{l1}.
The proof state contains the set of \emph{assumptions}---\coqi{n}, \coqi{l1}, \coqi{l2}, and \coqi{IHl1}---above the line and the \emph{proof goal} below the line.

\begin{figure}
  \begin{center}
  \begin{minipage}{0.5\textwidth}
  \begin{coqcode}
n : nat
l1, l2 : list
IHl1 : length (append l1 l2) = length l1 + length l2
============================
length (append (Cons n l1) l2) = length (Cons n l1) + length l2
  \end{coqcode}
  \end{minipage}
  \end{center}
  \caption{Example Coq proof state.}
  \label{fig:proof-state}
\end{figure}

From our informal survey of students at the end of the semester, we learned that students found the feedback that Coq provides---the proof state as well as whether the proof is valid---immensely helpful.
The instructor also found their grading burden significantly reduced thanks to Coq's proof checking and regularization of the format of proofs.
These findings are consistent with others experiences in using Coq in the classroom~\cite{pierce_2009, greenberg_2019}.

However, we found some substantial downsides with using Coq in our course:
\begin{itemize}
  \item \textbf{Navigating General-purpose Logics}: to handle broad range of propositions that mathematicians might pose, Coq and other proof assistants are built upon general-purpose logics such as the Calculus of Inductive Constructions~\cite{coquand_1988}.
  In some cases, such as with program correctness, the logic allows for seamless verification.
  But in other cases, the mathematician must spend significant amounts of effort encoding their proof domain into the logic, \eg, encoding combinatorics and probability~\cite{haven_2012}.
  These encodings rapidly become domain-specific languages in their own right, complete with the learning curves of picking up new syntax and mapping one's ideas (in this case, a proof) into a form appropriate for the encoding.
  \item \textbf{Baroque Syntax}: Coq's syntax comes the ML lineage of programming languages.
  While elegant and compact, it is unfamiliar to most students thus inducing a significant learning curve in the classroom.
  \item \textbf{Excessive Automation}: tools such as Coq designed for mathematically-sophisticated users
  typically do not show the excruciating level of detail inherent in single-stepping our substitution notional machine,
  so Figures~\ref{fig:ints-arith}, \ref{fig:ints-arith2}, and \ref{fig:punct1} might end up as a single command (\eg, Coq's \coqi{simpl} tactic which simplifies expressions as much as possible), illustrating nothing for the students.
  Instructors either need to adapt their pedagogy to the particulars of the tool (potentially missing out on learning opportunities such as the single-stepping described above) or wrestle with the tool to try to get the behavior they want from the system.
\end{itemize}

\subsection{Introducing Orca}

To address these concerns, in tandem with our pedagogy, we are also developing \orca, a proof assistant for undergraduate education~\cite{osera-blocks-2017, chen_2017}.
\orca is composed of two components:
\begin{enumerate}
  \item A logic programming-like core for carrying out deductive reasoning and an embedded domain-specific language in Haskell for authoring logics.
  \item A web-based blocked-based language for authoring proofs in \orca.
\end{enumerate}
\orca is still in on-going development to support the courses at Haverford and Grinnell.
Here, we briefly discuss the current design of \orca.

\subsubsection{Core Orca}
The core of \orca is a simple deductive inference engine over function symbols and relations between those symbols:
\[
\begin{tabu}{lcll}
  s & \bnfdef & f(s1, \ldots, sk) & \text{Function symbols} \\
  R & \bnfdef & r(s1, \ldots, sk) & \text{Relations}.
\end{tabu}
\]
Constants such as strings or numbers are represented as nullary symbols, \ie, symbols with no arguments.

\orca manages a collection of user-defined logics, each of which define:
\begin{itemize}
  \item \emph{Sorts}: valid function symbols along with their signatures,
  \item \emph{Relations} along with their signatures,
  \item \emph{Relation actions} which define, for each relation, how to perform deduction with that relation.
\end{itemize}
The most common of these relation actions is the \emph{inference action} which allows the user to perform deductive inference according to a collection of user-defined inference rules.
Relation actions are defined within the embedded domain-specific language, so the user has the full power of the Haskell language to decide how \orca will handle each relation.

\newcommand{\evenr}{\mathsf{even}\xspace}

As a simple example, we might define a simple deductive logic for determining whether a natural number is even.
In Grinnell's discrete mathematics course, this logic is used as a first example for introducing natural deduction-style reasoning.
We define a unary relation (\ie, a predicate) $\evenr(n)$ which proposes that the natural number $n$ is even along with the following inference rules for $\evenr$:
\begin{gather*}
\inferrule[even-zero]
  {}
  {\evenr(0)} 
\qquad
\inferrule[even-nonzero]
  {\evenr(n)}
  {\evenr(n+2)} 
\end{gather*}
\textsc{even-zero} assers that $0$ is even and \textsc{even-nonzero} asserts that $n+2$ is even whenever $n$ is even.
In this logic, a proof that $4$ is even might be written as follows:
\begin{claim*}
  $4$ is even.
\end{claim*}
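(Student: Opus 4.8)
The plan is to construct a derivation tree for $\evenr(4)$ using only the two inference rules \textsc{even-zero} and \textsc{even-nonzero}. Since \textsc{even-zero} is the only axiom (the only rule with an empty set of premises), every derivation must be anchored at $\evenr(0)$, and the only way to increase the argument is to apply \textsc{even-nonzero}, which steps the predicate from $\evenr(n)$ up to $\evenr(n+2)$. So I would reason backward from the goal: to establish $\evenr(4)$ it suffices, by \textsc{even-nonzero} instantiated at $n = 2$, to establish $\evenr(2)$; to establish $\evenr(2)$ it suffices, by \textsc{even-nonzero} instantiated at $n = 0$, to establish $\evenr(0)$; and $\evenr(0)$ holds immediately by \textsc{even-zero}.

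Reading this chain forward yields the actual derivation. First I would apply \textsc{even-zero} to conclude $\evenr(0)$. Then I would apply \textsc{even-nonzero} with $n = 0$ to conclude $\evenr(0+2)$, that is $\evenr(2)$. Finally I would apply \textsc{even-nonzero} with $n = 2$ to conclude $\evenr(2+2)$, that is $\evenr(4)$. The result is a linear derivation tree of height three whose single leaf is the \textsc{even-zero} axiom, which I expect to be displayed exactly this way in \orca as the sequence of relation actions the student selects.

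Given the simplicity of the rules, there is no genuine mathematical obstacle; the one point requiring care is matching the arithmetic in the rule's conclusion against the target numerals, \ie recognizing that $4$ unifies with the pattern $n+2$ precisely when $n = 2$, and similarly $2$ with $n = 0$. In \orca's deductive engine this is exactly the unification performed when the \textsc{even-nonzero} inference action is selected, and the backward search is guaranteed to terminate because each step strictly decreases the natural-number argument until it bottoms out at the base case $0$.
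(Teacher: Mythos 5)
Your proof is correct and follows essentially the same route as the paper's: a backward refinement from $\mathsf{even}(4)$ to $\mathsf{even}(2)$ to $\mathsf{even}(0)$ via \textsc{even-nonzero}, discharged by the \textsc{even-zero} axiom. Your forward reading of the chain is exactly the alternative ``forwards'' style the paper mentions in its footnote, so nothing is missing.
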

\begin{proof}
  By \textsc{even-nonzero}, to prove $4$ is even, it suffices to show that $2$ is even.
  By \textsc{even-nonzero}, to prove $2$ is even, it suffices to show that $0$ is even.
  Finally, we know that $0$ is even by \textsc{even-zero}.\footnote{%
    This proof is written in a \emph{backwards} style where we take the claim and refine it to an axiom, in this case, the \textsc{even-zero} rule.
    Alternatively, we could have written the proof in a \emph{forwards} style where we start with an the axiom that $0$ is even and use the deductive rules to eventually conclude $4$ is even.
  }
\end{proof}

\subsubsection{Authoring Logics}

\begin{figure*}
\begin{haskellcode}
-- Orca provides various "mk" functions, eg mkRule,
--  as well as asParser, SymV, RelV (all used below)

-- * Sorts and Relations
data Nat = N' Id | O | S Nat
  deriving Generic

newtype EvenRel = EvenRel Nat
  deriving Generic

instance SymV Nat
instance RelV EvenRel

-- * Inference Rules
zeroEven, sucEven :: InferRule

zeroEven = mkInferRule0 "zeroEven" [] [] concl
  where
    --------------------
    concl = EvenRel O

sucEven = mkInferRule1 "sucEven" [] [] concl prem
  where
    prem = EvenRel (N' "n")
    -----------------------------------------
    concl = EvenRel (S (S (N' "n")))

-- * Parser and pretty-printer code omitted
-- pNat, ppNat, pEvenRel, ppEvenRel

-- * Logic
evenLogic :: Logic
evenLogic = mkLogic "Even" "EvenRel" builder
  where
    builder = do
      mkSort "Nat" $ mkSymTranslator (asParserM pNat) ppNat
      mkInferSys "EvenRel" ["Nat"] (mkRelTranslator
        (asParserM pEvenRel) ppEvenRel ppEvenRel) [zeroEven, sucEven]

db :: LogicDb
db = [ mkEntry evenLogic ]
  where
    mkEntry l = (logicName l, l)

\end{haskellcode}
\caption{Example even logic declared in \orca.}
\label{fig:even-orca}
\end{figure*}

\orca's embedded domain-specific language gives instructors the power and flexibility of the full Haskell programming language to specify their logic in a type-safe manner.
\autoref{fig:even-orca} gives the declaration of the evenness logic in \orca\footnote{%
  The figure elides the parser and pretty-printer for natural numbers and the evenness relation which are standard.
}.
The instructor declares the sorts and relations of their logic using standard Haskell algebraic datatypes.
We declare a datatype for natural numbers, \haski{Nat}, with a standard peano-style unary representation---a \haski{Nat} is either zero, written \haski{O}, or the successor of a natural number \haski{n}, written \haski{S n}---and translate between numeric literals and this representation during parsing and pretty-printing.
The \haski{N'} constructor of \haski{Nat} acts as an \orca \emph{meta-variable}, standing in for any \haski{Nat} in the definition of the logic's inference rules.
We also declare a datatype to represent the even relation, \haski{EvenRel}, with a sole constructor that takes a \haski{Nat} as an argument.
The eDSL uses Haskell's \haski{generic deriving} mechanism to automatically tie these datatypes to Core \orca's internal representations for symbols and relations.
All subsequent definitions use these datatypes and Haskell's type system ensures that we do not mix up sorts and relations in erroneous ways.

We declare the inference rules for the system using these datatypes as well.
With some \orca library functions (\haski{mkInferRule0} and \haski{mkInferRule1} for making inference rules with zero and one premise, respectively) and some stylized syntax, we are able to declare the \textsc{even-zero} and \textsc{even-nonzero} rules in a manner similar to how we would write them down on paper (\haski{zeroEven} and \haski{sucEven} in \autoref{fig:even-orca}, respectively).

Finally, we bundle all these definitions together in an \orca \haski{Logic} which ties together the declared sorts, relations, and relation actions together.
With this simple logic, we require no automation and thus no custom relation actions are necessary.
The \haski{mkInferSys} helper function, in particular, registers a new relation, in this case \haski{EvenRel}, and assigns it the basic inference action.

\subsubsection{Relation Actions}

\begin{figure*}
  \begin{haskellcode}
-- RId, Prop previously defined...
  
data Ctx
  = G' Id
  | GNil
  | GCons RId Prop Ctx
  deriving (Show, Eq, Generic)

data IsInRel = IsInRel RId Prop Ctx
  deriving (Show, Eq, Generic)

lookupAction :: PremiseAction
lookupAction = PremiseAction action
  where
    action (fromRel -> Just (IsInRel (RId x) p g)) =
      case find x g of
        Just p' ->
          if p == p' then
            dischargeGoal
          else
            failwith $ DoesNotApplyErr "Proposition does not match"
        Nothing -> failwith $ DoesNotApplyErr ("Hypothesis not found: " ++ x)
    action r = failwith $ DoesNotApplyErr (show r)
    find _ GNil = Nothing
    find x (GCons (RId y) p g) =
      if x == y then pure p else find x g
    find x (GCons (R' _) _ g) = find x g
  \end{haskellcode}
  \caption{Context lookup action in \orca.}
  \label{fig:orca-example-action}
\end{figure*}

More complicated sorts of relation actions can be defined within the DSL.
For example, in a first-order logic, one might encode a \emph{context} $\Gamma$ which maps variables to propositions assumed to be provable.
\[
\Gamma = x_1{:}p_1, \ldots, x_k{:}p_k.
\]
In elementary presentations of first-order logic, one might include a rule for proving a propositional variable whenever the context contains exactly that variable:
\[
\inferrule[var-exact]
  {}
  {x{:}p \vdash p}
\]
along with rules for manipulating the context so that a proof can \emph{contract} a context to a single entry of interest:
\[
\inferrule[ctx-exchange]
  {x_2{:}p_2, x_1{:}p_1, \Gamma \vdash p}
  {x_1{:}p_1, x_2{:}p_2, \Gamma \vdash p}
\qquad
\inferrule[ctx-weaken]
  {x_2{:}p_2, \Gamma \vdash p}
  {x_1{:}p_1, x_2{:}p_2, \Gamma \vdash p}.
\]
Depending on the instructor's pedagogical goals, this particular presentation of context lookup might be too burdensome to the student.
Rather than relying on manually-entered inference rules, \orca allows the instructor to define a non-trivial premise action which performs context lookup automatically.
\autoref{fig:orca-example-action} gives the code for this premise action which is similar in implementation to the standard association list \haski{lookup} function. 
With this action in place, \orca automatically determines whether an appropriate assumption is in the context when the variable rule is invoked.
More generally, our system of relation actions allow the instructor to define proof automation in a fine-grained way that aligns with their intended pedagogy rather than fighting with the proof assistant to get the level of automation that they want.

\subsubsection{Rules for Evaluation}

To implement the rules for substitutive evaluation from \autoref{sec:substitution}, we define a logic that utilizes a notion of \emph{evaluation context}~\cite{felleisen-1992} in order to succinctly describe how to evaluate a sub-expression of a program.

For example, we might write the \textbf{if-True} rule of \autoref{sec:subst-if} on paper as:
\[
  \inferrule[if-true]
    {}
    {\pyi{e1 if True else e2} \longrightarrow \pyi{e1}}
\]
However, the 3-2-1 curriculum allows the student to replace a conditional that appears \emph{anywhere} inside the program, not just programs that themselves are conditionals.
Effectively, we must \emph{search} for such a conditional that has the appropriate shape and \emph{replace} that conditional with a new expression according to the rule, \emph{leaving the containing expression untouched}.

To do this, we break up a program into an outer \emph{evaluation context}, $E$, and a sub-expression of interest $e$.
We can think of this evaluation context $E$ as an expression with a hole that is filled by the expression $e$.
%
% Note that parenthesis are needed below if we want Python to parse this as a choice between 1 and 101 (vs 1 and 100)
%
% >>> 1 + 0 if True else 100
% 1
% >>> 1 + 0 if False else 100
% 100
% >>> 1 + (0 if True else 100)
% 1
% >>> 1 + (0 if False else 100)
% 101
%
For example, the overall expression \pyi{1 + (0 if True else 100)} can be broken up into an evaluation context $E = \pyi{1 + _}$ and a sub-expression $e = \pyi{(0 if True else 100)}$, allowing us to focus on the conditional for the purposes of applying the conditional rule.
Note that the syntax of possible evaluation contexts $E$ determines the evaluation orders that the overall logic admits.

\begin{figure*}
  \begin{haskellcode}
ifTrue :: InferRule
ifTrue = mkInferRule3 "ifTrue" [] [] concl prem1 prem2 prem3
  where
    prem1 = DecomposeRel (E' "e") (EC' "ctx") (EIf (EBool True) (E' "e1") (E' "e2"))
    prem2 = RecomposeRel (EC' "ctx") (E' "e1") (E' "eout")
    prem3 = StepRel (S' "sig") (E' "eout")
    -------------------------------------------------------------------------------
    concl = StepRel (S' "sig") (E' "e")
    
-- where StepRel is the relation for the 3-2-1 S.N.M., analogous to the "even" logic's EvenRel
  \end{haskellcode}
  \caption{Inference rule declaration of \textbf{if-True} in \orca.}
  \label{fig:orca-if-true}
\end{figure*}

With this machinery, we can rewrite the conditional rule in a form amendable to \orca:
\[
  \inferrule[if-true-ctx]
    {}
    { E[\pyi{e1 if True else e2}] \longrightarrow E[\pyi{e1}] }.
\]
This version of the rule allows us to replace a true conditional with its if-branch \emph{anywhere} such a conditional occurs in our program.
Implicit in this rule is the \emph{decomposition} of the initial program into its evaluation context and sub-expression and the \emph{re-composition} of a context and new sub-expression into a new program.
The \orca implementation of this rule found in~\autoref{fig:orca-if-true} makes this explicit through two relations \haski{DecomposeRel} \haski{RecomposeRel} whose associated relation actions perform decomposition and re-composition of the current goal expression automatically.
This code is equivalent to the final version of the rule as realized in \orca:
\[
  \inferrule[if-true-orca]
    {%
      e = E[\pyi{e1 if True else e2}] \\
      e' = E[e1] \\
    }
    { e \longrightarrow e' }
\]

\subsubsection{Block-based Front-end}

To ease adoption in undergraduate classrooms, \orca is a web application immediately accessible to undergraduates through any modern web browser.
The back-end server application is customized by the instructor to serve a particular set of logics to students written in the eDSL described above.
The client-side UI allows students to write propositions and author proofs in those logics.

Clients query the server for a particular logic and if the server supports that logic, sends the client the information necessary for students to author proofs in that logic:
\begin{itemize}
  \item The sorts and relations of the logic and
  \item The inference rules for the logic.
\end{itemize}
The client sends a serialized version of the proof back to the \orca server and server responds with the results of checking the proof.
Critically, this includes the state of the proof as it evolves through each of the provided proof steps.
As such, the client is completely \emph{stateless}---it sends complete proofs to the server and receives complete results back to be consumed by the user.

\begin{figure}
  \includegraphics[width=\textwidth]{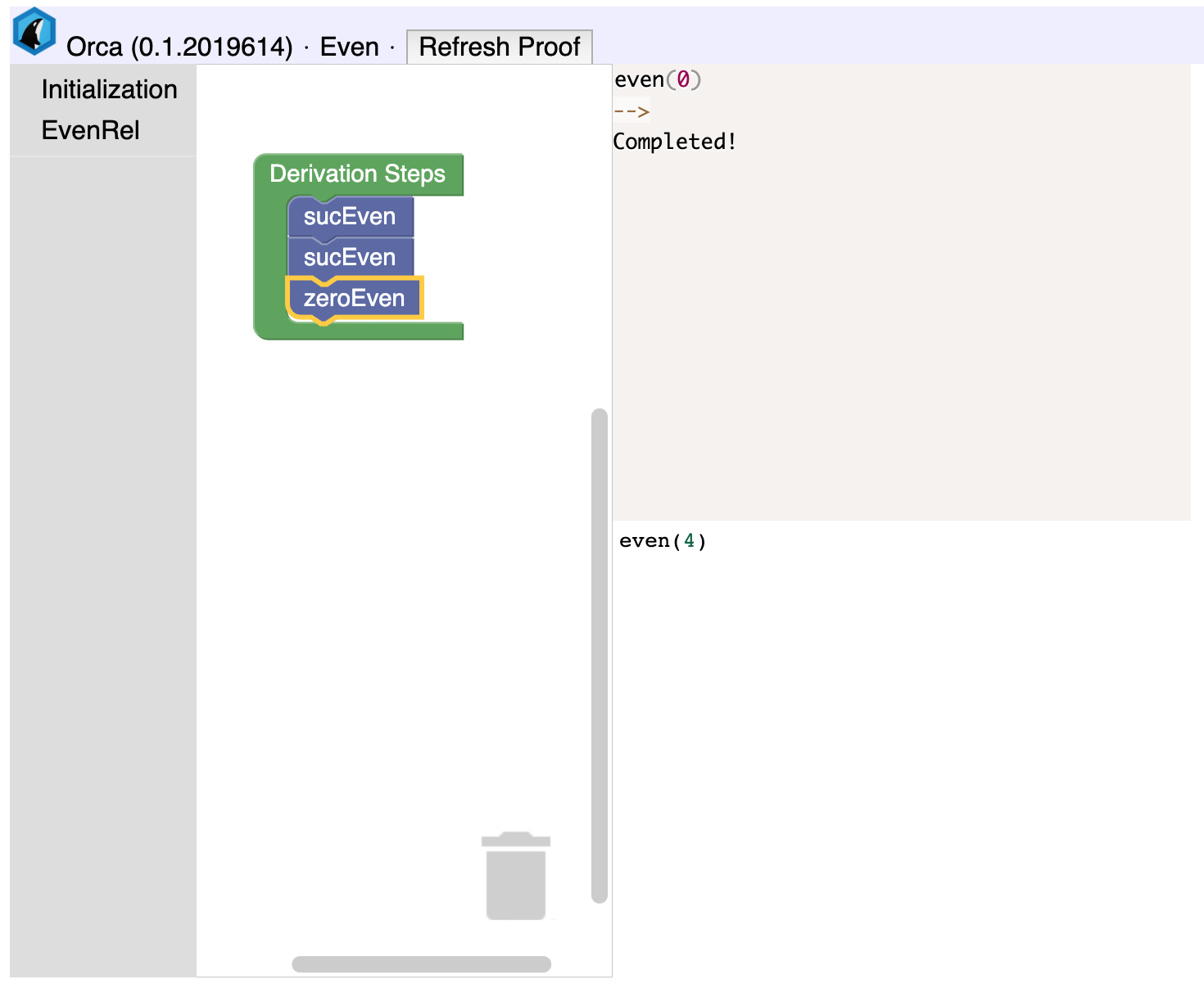}
  \caption{$even(4)$ as realized in \orca}
  \label{fig:orca-even-ss}
\end{figure}

\orca's current front-end is a block-based front-end built on the Blockly framework~\cite{pasternak-2017}.
\autoref{fig:orca-even-ss} gives an example of the evenness logic rendered in \orca along with the proof that four is even.
The initial proof goal is given by the user in the bottom-right pane of the \orca application.
Sequences of blocks in the framework correspond to proof steps where each individual block corresponds to an inference rule applied to the proof state at that point in the proof.
The upper-right \emph{status pane} displays the before-and-after state of the proof at the highlighted step.
In \autoref{fig:orca-even-ss}, the final step of the proof, \textsc{zeroEven} is highlighted, and the status pane indicates that the state of the proof before applying the rule is $even(0)$ (\ie, the user must prove $even(0)$) and the proof is complete after applying the rule.
Orca can support more interesting structures,
such as proof by cases, or the introduction of an induction hypothesis,
to support logics that are best framed in the classic mathematical conjecture/proof structure.

\section{Integrating CS1 and DM}
\label{sec:integrating}

The CS1 and Discrete Mathematics curricula described above
do not currently exist in the same environment;
Haverford's CS1 is paired with a discrete math course that does not feature \orca (or Coq),
and Grinnell's discrete math is paired with a CS introductory course without explicit code reasoning.

The relationships between proof-construction, reasoning about computation, and coding
could be brought to the fore by combining the curricula at a single institution.
This would let students see the deep similarities in the use of formal reasoning
in CS and mathematics,
and also recognize and appreciate the different structures.

The substitutional execution of Figure~\ref{fig:ints-code} is,
of course, an illustration of direct proof technique, in a different guise.
We can emphasize this for students by prefixing
figures~\ref{fig:ints-arith} or \ref{fig:ints-arith2}
with ``Claim: x = 42'' and, ending with ``Q.E.D.'',
to provide familiar mathematical structure.

Proof-by-cases is equivalent to our applying the
if-irrelevant rule of Section~\ref{sec:subst-if}
``in reverse'', \ie, to introduce, rather than remove, an \pyi{if}.
For example,
for Section~\ref{sec:discrete-mathematics}'s proof that
$( \forall \oi{b}.\,\oi{andb b false} = \oi{false} )$,
this reverse-application can be used to posit \pyi{b==True},
turning
\pyi{(b and False)},
into
% the equivalent % "avoid hboxes"
\pyi{((b and False) if b==True else (b and False))};
after further simplification, an ordinary application of if-irrelevant
turns \pyi{((False) if b==True else (False))}
into \pyi{False}, completing the proof.
Conversely, the application of our if-irrelevant rule in Figure~\ref{fig:power-pf1}
could instead have been handled by considering cases that \pyi{y == 1} could be true or false.

The overall structure of our correctness proofs for recursive functions is,
of course, that of inductive proof.
Figure~\ref{fig:power-pf1} is a proof of the inductive hypothesis
$forall \oi{x,y}, \oi{x} \in \mathbb{R} \land \oi{y} \in \mathbb{Z}^+,
\oi{power(x,y)}=x^y$,
by induction on $\oi{y}$.
Illustrating this one fact in both the substitution notional machine
and the classic structure of mathematical proof,
may help students see the connection.
This illustration is particularly relevant if the instructor for CS1 plans to move
into correctness proofs for functions whose
expected results are described via properties rather than equations,
as per Section \ref{sec:symbolic-additional}
(where we switch to the classic mathematical form in the 3-2-1 curriculum).

Note that it is often possible to convert induction into the substitution notional machine,
though we do not bring this up in class,
as we expect it would create more confusion than clarity.
But, for the curious,
note that we could express Section~\ref{sec:dm-induction}'s
claim ``Append Preserves Length''
as a statement about a function \pyi{applen} that returns
\pyi{len(append(l1, l2))},
for which the post-condition specifies that it returns \pyi{len(l1) + len(l2)}.
% see A-few-reference-docs/FVtE-len-append-p1.jpg (and ...p2.jpg) for details
(Note that this requires us to use some substitutions ``in reverse'',
e.g. converting \pyi{0} into \pyi{0 if False else 1+len(tail(l1))},
and then eventually recognizing the body of the \pyi{len} function and
converting to \pyi{len(l1)} with the reverse of the name-to-body rule.
As stated above,
this example is not something we usually want to put into the substitution form.)

\TRvsCCSC{
\section{Experimental Validation}
\label{sec:experiments}

We hope to design experiments to measure many of the following questions,
though some may not be answerable within a study of reasonable size.
Some have already been studied in other contexts,
\eg, in~\citeN{fkt:sigcse:2017sma,tfk:sigcse:2018notional}
and we may refine our questions to let us align with their prior experiments,
and thus improve comparability.

\subsection{Overall Value of Substitution and Deduction in CS1}

We would like to know:
when compared to an essentially similar CS1 that replaces
exercises in deduction and proof with additional coding practice,
does this material
\begin{description}
  \item[RQ V-1] improve student ability to perform formal deduction on code?
  \item[RQ V-2] decrease or increase programming and logical errors in their code?
  \item[RQ V-3] increase student appreciation of the relevance of mathematical elements of other courses?
  \item[RQ V-4] increase interest in subsequent computer science courses (or, specifically, formal courses)?
  \item[RQ V-5] increase success rates in later more-formal courses of the major?
  \item[RQ V-6] increase success rates in later courses that rely on detailed models of execution, \eg concurrent programming?
  \item[RQ V-7] reduce outcome disparities created by uneven mathematical preparation prior to entrance to college?
\end{description}

We are also curious about the degree to which the answers to these questions is sensitive to the content of
the discrete math course,
\eg, whether or not it contains any material about proofs involving code,
or whether it uses the 3-2-1 SNM and/or \orca specifically.

We are also curious about early approaches to predicting success in later courses
(\ie, RQ V-5 and RQ V-6)
at the end of CS1.
For example, we might try to probe the ability to generalize important lessons about deductive proof
by asking students to find the mistake in the classic ``1+1=1'' false proof,
perhaps on both a pre-test and post-test for CS1.
Or, we might probe their ability to think about execution by asking questions
about situations in which actual execution order can impact the program result,
\eg simplified examples of concurrent programming.
%%
%% ``In later courses, you will have a chance to
%% program multiple processors to work together on a problem;
%% in this case, processors can sometimes ``notice'' each other's execution order.
%% If one processor executed the code ``\pyi{a=1; b=2; c=a+b; ...}'',
%% which of the following sets of values do you think could possibly be observed by another processor?''
%% (note that, in the absence of locks or memory fence instructions,
%% $a=0, b=2$ is possible on many systems).

\subsection{Pedagogy Details}

Earlier sections of this document note a number of pedagogical choices.
The overall value of the curriculum could well be very sensitive to these choices.
Among the questions we'd like to answer:
\begin{description}
  \item[RQ D-1] 
%               For those students who are unable to produce verifications similar to Figure~\ref{fig:power-pf1},
                At which point(s) in our sequence
                from substitution, to deduction of an abstract property, to verification,
                do students become confused? (Essentially, RQ1 of~\citeN{tfk:sigcse:2018notional}.)
  \item[RQ D-2] Are students confused by the very idea of flexibility in execution order,
                and, if so, are they helped by considering analogies to other computational tasks
                that can be solved in several ways, such as sorting?
  \item[RQ D-3] How much are students helped by guidance about which rule to use first,
                \eg by showing left-to-right applicative-order evaluation,
                or providing one or more \orca option(s) to do so automatically,
                rather than encouraging them to explore different order choices?                
  \item[RQ D-4] Is the name-to-body rule of Section~\ref{sec:subst-body} more, or less, confusing
                than the statement-centric system we had been using (or the version with $\lambda$,
                of Sections~\ref{sec:subst-details-funcdef} and~\ref{sec:subst-details-localdef})?
  \item[RQ D-5] For cases in which a substitution is legal only within a specific context,
                such as the simplification of \pyi{y>1} in Figure~\ref{fig:power-pf1},
                to what degree should students be forced to consider context,
                to support their later work with mathematical proofs?
  \item[RQ D-6] Do Section~\ref{sec:subst-details}'s approaches to foregrounding scope rules for variables
                help students with this issue? (We hope to employ the relevant examples used in \cite{fkt:sigcse:2017sma,tfk:sigcse:2018notional}.)
% if so, do we do alpha-conversion for them?
\end{description}

\subsection{Tool Support via Orca}
In evaluating \orca{} and its associated pedagogy, we are interested in answering the following research questions:
\begin{description}
  \item[RQ O-1] Does \orca{} enhance student ability to understand and perform deductive reasoning?
  \item[RQ O-2] Does \orca{} enhance student confidence in learning mathematics?
  \item[RQ O-3] Does \orca{} enhance student understanding of the connection between mathematics and computer science?
\end{description}

We are also interested in the degree to which
the answers to our ``overall value'' and ``pedagogy details'' questions
are sensitive to the use of \orca (when compared to manual application of the same substitution rules).
In particular, we are interested in assessing \orca{}'s ability to ``level the playing field'' for those students that come to the our introductory courses with lesser mathematical backgrounds or less confidence in their mathematical abilities.

}{}

\section{Conclusion}
\label{sec:conclusion}

A number of institutions have used substitution-based models of computation
in introductory computer science for many years.
Independently, proofs involving code
have been used to connect discrete mathematics courses to other elements of a computer science major.
We have explored the use of the deep connections among substitution, deduction, and proof,
to enhance and deeply interconnect these two critical courses that begin study of computer science.

Our curricula for CS1 and discrete math are independent but mutually-supporting.
They also offer several choices for the curricular dividing line between the two:
as presented here, all of the logical elements of an inductive proof can be presented in CS1,
within the framework of verification of software by execution for symbolic inputs.
Our CS1 curriculum can also connect more deeply into mathematical structure and terminology,
by re-framing that verification,
or exploring verifications of functions with more interesting specifications
or imperative codes that are verified with multi-obligation proof structures.
Alternatively, CS1 could entirely omit verification, or even symbolic execution,
by selective application of the rules of the 3-2-1 substitution notional machine.
A discrete mathematics course can then pick up this sequence,
with the shared \orca tool emphasizing continuity of tools and concepts,
and slight variations of the \orca logic emphasizing differences in structure and vocabulary.

We believe our new variants of the 3-2-1 substitution rules, applied with \orca,
will provide a minimally-difficult pathway for new programmers connect program execution to proof,
and that this connection is best made by the use of coordinated tools and examples
that appear in both introductory programming and discrete mathematics.
We hope to test these beliefs in the coming years.

%%%\section{Figures ... boilerplate from template}
%%% some useful stuff Dave doesn't always remember...
%%%  \includegraphics[width=\linewidth]{sample-franklin}
%%%  \caption{1907 Franklin Model D roadster. Photograph by Harris \& Ewing, Inc. [Public domain], via Wikimedia Commons. (\url{https://goo.gl/VLCRBB}).}
%%%  \Description{The 1907 Franklin Model D roadster.}
%%% q\end{figure}

% The acknowledgments section is defined using the "acks" environment (and NOT an unnumbered section). This ensures
% the proper identification of the section in the article metadata, and the consistent spelling of the heading.
\begin{acks}
%% To Robert, for the bagels and explaining CMYK and color spaces.

We would like to thank Shriram Krishnamurthi, Kathi Fisler, and their students at Brown University,
for pushing us to think more deeply about how to minimize confusion for new CS students.
Their questions and thoughts have helped us identify just what is, or is not, required to establish our
connection between code and proof.
We would also like to thank our students whom have worked on \orca thus far: Myles Becker, Elise Brod, Jerry Chen, Addison Gould, Medha Gopalaswamy, Yash Gupta, Hadley Luker, Eli Most, Lukas Resch, Sooji Son, and Prabir Pradhan.

We are also deeply appreciative of our students,
who have survived, and even thrived, in earlier steps of our iterative confusion-minimization process.
Their comments and feedback, often offered with impressive calm and maturity,
have greatly improved our curricula and teaching.

This material is based upon work supported by the \grantsponsor{nsf}{National Science Foundation}{https://nsf.org} under Grant No.~\grantnum{nsf}{1651817}.
Any opinions, findings, and conclusions or recommendations expressed in this material are those of the author and do not necessarily reflect the views of the National Science Foundation.
\end{acks}

%
% The next two lines define the bibliography style to be used, and the bibliography file.
%\bibliographystyle{alpha}      % what Dave W prefers (but alphanat rather than alpha)
\bibliographystyle{plainnat}   % For Arxiv.org?
\bibliography{SymExecProof}

% 
% If your work has an appendix, this is the place to put it.
%% \appendix
%% 
%% \section{Research Methods}

\end{document}